\let\doendproof\endproof
\renewcommand\endproof{~\hfill\qed\doendproof}
\newcommand{\emb}{\mathcal{E}}
\newcommand{\ps}{\mathcal{P}}
\newcommand{\TT}{\mathcal{T}}
\newenvironment{sketch}{\noindent{\it Proof Sketch.}}{\mbox{}\hfill\qed\par}
\title{On the Book Thickness of 1-Planar Graphs
%\footnote{Supported in part by the Deutsche Forschungsgemeinschaft (DFG), grant Br835/18-1.}
\thanks{Supported in part by the Deutsche Forschungsgemeinschaft (DFG), grant Br835/18-1.}
}
\author{
Md.~Jawaherul~Alam\inst{1}
\and
Franz J. Brandenburg\inst{2}
\and
Stephen~G.~Kobourov\inst{1}
}
\institute{
    Department of Computer Science, University of Arizona, USA\\
    {\tt \{mjalam, kobourov\}@cs.arizona.edu}
\and
    University of Passau, 94030 Passau, Germany\\
    {\tt brandenb@informatik.uni-passau.de}
}
\begin{document}

\maketitle

\begin{abstract}
In a \textit{book embedding} of a graph $G$, the vertices of $G$ are
 placed in order along a straight-line called \textit{spine} of the book, and the edges of $G$ are drawn on a  set of half-planes, called the \textit{pages} of the book, such that
two edges drawn on a page do not cross each other.
The minimum number of pages in which a graph can be embedded is called the \textit{book-thickness} or the \textit{page-number} of the graph.
 It is known that every planar graph has a book embedding on at most four pages.
 Here we investigate the book-embeddings of \textit{1-planar} graphs.
A graph is \textit{1-planar} if it can be drawn in the plane such that each edge is crossed at most once.
 We prove that every 1-planar graph has a book embedding
 on at most 16 pages and every 3-connected 1-planar graph has a book embedding
 on at most 12 pages.
 The drawings can be computed in linear time from any given 1-planar embedding of the graph.
\end{abstract}

\section{Introduction}

Graph embeddings and linear layouts of graphs play an important role
in graph drawing, parallel processing, matrix computation, VLSI
design, and permutation sorting. A linear layout prescribes the
order in which the vertices are processed and the embedding of the
edges reveals structural properties of the given graph. A particular
example is a book embedding in which the edges are assigned to pages
such that edges in the same page nest and do not cross.
Equivalently, the vertices are visited in the linear order and the
edges are processed in stacks. The concept of a book embedding of a
graph was introduced by Ollmann~\cite{ollmann1973book} and by Kainen~\cite{Kainen74}
and can be formalized as follows.
%L. T. Ollmann. On the book thicknesses of various graphs.
%In F. Hoffman, R. B. Levow, and R. S. D. Thomas, editors,
%Southeastern Conference on Combinatorics, Graph Theory and
%Computing, volume VIII of Congressus Numerantium, page
%459, 1973.
A $k$-page book embedding  of a
graph $G=(V, E)$ is defined by a linear order of the vertices of $G$ and
a partition of the edges into $k$ sets $E_1, \dots E_k$, so that the vertices of $G$
are placed on a line in the given order and edges in $E_i$ are drawn on page $i$
(typically with circular arcs), so that no two edges on the same page cross.
%edges (u, v) and (w, z) cross if u < w < v < z or w < u < z < v),
The book thickness of the graph $G$ is the smallest number of pages
needed, also known as stack number or page number.
%and is a restricted
%version of the notion of graph thickness, where the vertices can be placed in arbitrary positions.
%In a series of papers Eppstein carefully analyzes the relationships between graph thickness, geometric thickness,
%and book thickness~\cite{Epp02,eppstein2001separating}.

The book thickness of planar graphs has been studied for over 40 years.
 Bernhart and Kainen~\cite{bk-btg-79} characterized the graphs with book
 thickness one as the outerplanar graphs and the graphs with book thickness
 two as the sub-Hamiltonian planar graphs. Deciding whether a general planar
 graph has book thickness two is NP-hard~\cite{chung1987embedding}.
 It is known that planar graphs require 3 pages and a series of improvements
 brought down the upper bound from 9~\cite{buss1984pagenumber},
 to 7~\cite{heath1984embedding}, and 4~\cite{Yanna89}.
%J. F. Buss AND P. W. SHOR, On the pagenumber of planar graphs, in �Proceedings, 16th
% ACM Symp. on Theory of Computing 1984, 98-100.
%L. HEATH, Embedding planar graphs in seven pages, in �Proceedings, 25th Annu. Symp. on Foundations of Computer Science, 1984,� pp. 7483.
%S. IS~RAIL, An algorithm for embedding planar graphs in six pages, manuscript, 1986.
Although in an earlier version of his 1989 paper Yannakakis in 1986~\cite{YannaSTOC86}
 claimed 4 pages are necessary, and later Dujmovic and Wood in 2007~\cite{dujmovic2007graph}
 also conjectured the same lower bound, there is still no conclusive evidence that this is indeed the case.

More recently there has been a greater interest in studying non-planar graphs which
extend planar graphs by restrictions on crossings. A particular
example are {\em 1-planar graphs} which can be drawn in the plane with at
most one crossing per edge. Such graphs were first defined by Ringel in the context of simultaneously drawing a planar graph
and its dual~\cite{ringel-65}. In many respects, 1-planar
graphs %canonically
generalize planar graphs. There are 1-planar
embeddings as witnesses of 1-planarity, in which the crossings are
treated as special vertices of degree four, and which then result in planarizations. 
Like $n$-vertex planar graphs which have at most $3n-6$ edges,
$n$-vertex 1-planar graphs have at most $4n-8$ edges~\cite{PT97}.
 Both planar and 1-planar 3-connected
graphs admit straight-line drawings in $O(n^2)$ area (with the exception of one edge in the outer face for the densest 1-planar graphs)~\cite{ABK13}.
However, there is a major difference in the complexity of the
recognition of planar and 1-planar graphs, which can be done in linear time
for planar graphs while it is $NP$-hard for 1-planar graphs~\cite{GB07,KorzhikMohar13}. 
%Similarly, the graph thickness and geometric thickness of planar graphs is 1, while for 1-planar graphs it is 2 
On the other hand, there is a cubic time recognition algorithm for hole-free map graphs~\cite {CGP06},
 which for 3-connected graphs coincide with \textit{planar-maximal}
 1-planar graphs (i.e., where no edge can be added without creating more crossing).

%It is natural to ask about the book thickness of 1-planar graphs.
%Even though they, like planar graphs, have low average degree,
%it is known that there are graphs with bounded degree (or average degree)
%and unbounded book thickness (or geometric thickness)~\cite{barat2006bounded}.

%{\color{red}[[[MJA: need to cite the two papers by Zhi-Zhong Chen, Michelangelo Grigni and
% Christos H. Papadimitriou: (i) Recognizing Hole-Free 4-Map Graphs in Cubic Time. Algorithmica 45(2): 227-262 (2006), and (ii) Map graphs. J. ACM 49(2): 127-138 (2002).]]]}

In this paper, we address the problem of book embedding of 1-planar graphs.
 Recently Bekos \textit{et al.}~\cite{BBKR15} gave a constant upper bound of 39 on the book
 thickness of 1-planar graphs.
Here we prove that 1-planar graphs have book thickness at most 16 and 3-connected 1-planar
 graphs have book thickness at most 12.
% Here we improved the result by for 3-connected 1-planar graphs
% proving that they have book thickness at most twelve.
% We compute a set of crossing edges $X$ whose removal leaves a planar graph
% which is embedded on four pages. The crossing edges can be embedded on six pages.
 If the planar skeleton is Hamiltonian, then four pages suffice, and we have found 1-planar
 graphs which need four pages.

\section{Preliminaries}

%\subsection{Normal Planar-Maximal 3-Connected 1-Planar Graphs}

%We consider simple undirected graphs $G = (V,E)$ with $n$ vertices and $m$ edges.
 A \emph{drawing} of a graph $G$ is a mapping of $G$ into the plane such that vertices
 are mapped to distinct points and edges are Jordan arcs between their endpoints.
 A drawing is \emph{planar} if the edges do not cross and it is
 \emph{1-planar} if each edge is crossed at most once.
Hence in a 1-planar drawing the crossing edges come in pairs.
%Note that crossings between edges incident to the same vertex are not allowed.
For example, $K_5$ and $K_6$ are 1-planar graphs.
An \emph{embedding} of a graph is planar (resp. 1-planar) if it admits a
 planar (resp. 1-planar) drawing. An embedding specifies
 the \emph{faces}, which are topologically connected regions.
 The unbounded face is the \emph{outer face}.
% A face in a planar graph is specified by a cyclic sequence of edges on its boundary (or
% equivalently by the cyclic sequence of the endpoints of the edges).
Accordingly, a 1-\emph{planar embedding} $\emb(G)$ specifies the
faces in a 1-planar drawing of $G$ including the outer face. A
1-planar embedding is a witness for 1-planarity. In particular,
$\emb(G)$ describes the pairs of crossing edges, the faces where
the edges cross, and the \emph{planar} edges.

\begin{comment}
Each pair of \emph{crossing edges} $(a,c)$ and $(b,c)$ induces a
\emph{crossing point} $p$. Call the segment of an edge between the
vertex and the crossing point a \emph{half-edge}. Each half-edge is
\emph{impermeable}, analogous to the edges in planar drawings, in
the sense that no edge can cross such a half-edge without violating
the 1-planarity of the embedding. The non-crossed edges are called
\emph{planar}.

A \emph{planarization}  $G^{\times}$ is obtained from $\emb(G)$ by
using the crossing points as regular vertices and replacing each
crossing edge by its two half-edges. A 1-planar embedding $\emb(G)$
and its planarization share equivalent embeddings, and each face is
given by a list of edges and half-edges defining it, or
equivalently, by a list of vertices and crossing points of the edges
and half edges.
\end{comment}

Augment a given 1-planar embedding $\emb(G)$  by adding as many edges to $\emb(G)$ as possible
 so that $G$ remains a simple graph and the newly added edges are planar in $\emb(G)$. We call such
 an embedding a \textit{planar-maximal} embedding of $G$ and the operation \textit{planar-maximal
 augmentation}.
%%%%%%%%%%%%%%%%%%
Then each pair of crossing edges is augmented to a $K_4$.
%%%%%%%
 The \emph{planar skeleton} $\ps(\emb(G))$ consists of the planar edges of a
 planar-maximal augmentation. It is a planar embedded graph, since all pairs of crossing edges are
 omitted. Note that the planar augmentation and the planar skeleton are defined for an embedding,
 not for a graph.

\begin{comment}
 A graph may have different embeddings which give rise to different configurations
 and augmentations. The notion of planar-maximal embedding is different from the notions of maximal
 1-planar embeddings and maximal 1-planar graphs, which are such that the addition of any edge
 violates 1-planarity (or simplicity)~\cite{BEGGHR12}.

 The following claim, proven in many earlier
 papers ~\cite{thomassen88,FM07,s-rm1pg-10,HELP12,BEGGHR12}, shows that a crossing pair of
 edges induces a $K_4$ in planar-maximal embedding, since missing edges of a $K_4$ can be added
 without inducing new crossings.
\begin{lemma}
\label{lem:induced-K4} Let $\emb(G)$ be a planar-maximal 1-planar embedding of a graph $G$ and let
 $(a,c)$ and $(b,d)$ be two crossing edges. Then the four vertices $\{ a, b, c, d\}$ induce  a $K_4$.
\end{lemma}
\end{comment}

The \emph{normal form} for an embedded 3-connected 1-planar graph $\emb(G)$ is obtained by first
 adding the four planar edges to form a $K_4$ for each pair of crossing edges while routing them close
 to the crossing edges and then removing old duplicate edges if necessary. Such an embedding of a
 3-connected 1-planar graph is a {normal embedding} of it. A \textit{normal planar-maximal augmentation}
 for an embedded 3-connected 1-planar graph is obtained by first finding a normal form of the embedding
 and then by a planar-maximal augmentation. 

Given a 1-planar embedding $\emb(G)$, the normal planar-maximal augmentation of $\emb(G)$
 can be computed in linear time~\cite{ABK13}. We say that an embedded 3-connected 1-planar graph
 is a \textit{normal planar maximal} 1-planar graph if a normal planar maximal augmentation of the
 graph yields the same graph. In a 3-connected normal planar-maximal 1-planar graph, each pair of
 crossing edges $(a,c)$ and $(b,d)$ crosses each other either inside or outside the boundary of the
 quadrangle $abcd$ of the planar edges, and these define the so-called \textit{augmented X-} and
 \textit{augmented B-configurations}~\cite{ABK13}.

%\hspace{-3cm}{\bf SK: removed fig1 and def1  B, X, and ``configuration skeleton" -- check that we are not using this!}

%\begin{wrapfigure}{r}{.42\textwidth}
%\centering
%\includegraphics[width=0.42\textwidth]{figures/BW.pdf}
%\caption
%{(a) Augmented $X$-configuration, (b) augmented $B$-configuration.\vspace{-.8cm}}
%\label{fig:XB}
%\end{wrapfigure}
%\begin{definition}
%Let $\emb(G)$ be an embedded normal planar-maximal 1-planar %graph.
% An \emph{augmented
% $X$-configuration} consists of a $K_4$ with vertices $(a,b,c,d)$ such that the edges $(a,c)$ and
% $(b,d)$ cross inside the quadrangle $abcd$.
%An \emph{augmented $B$-configuration} consists of a $K_4$ with vertices $(a,b,c,d)$ such that
% the edges $(a,c)$ and $(b,d)$ cross beyond the boundary of  the quadrangle $abcd$.
%
%For both an augmented $X$- or augmented $B$-configuration, the edges not inducing a crossing
% with other edges in the configuration define a cycle, we call it the %\emph{skeleton} of the configuration.
%\end{definition}

For a 3-connected 1-planar graph $G$, Alam \textit{et al.}~\cite{ABK13} proved the following:

\begin{lemma}~\cite{ABK13}
\label{lem:embedding} Let $G$ be a 3-connected 1-planar graph with a 1-planar embedding $\emb(G)$.
 Then the normal planar-maximal augmentation of $\emb(G)$ gives a planar-maximal 1-planar embedding
 $\emb(G^*)$ of a supergraph $G^*$ of $G$ so that $\emb(G^*)$ contains at most one augmented
 B-configuration in the outer face and each augmented X-configuration in $\emb(G^*)$ contains no vertex
 inside its skeleton.
\end{lemma}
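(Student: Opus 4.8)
The plan is to work throughout with the planarization of the normal planar-maximal augmentation: every crossing pair $(a,c),(b,d)$ has been completed to a $K_4$ (a \emph{kite}), so its four planar edges bound a quadrangle $abcd$ that belongs to the skeleton $\ps(\emb(G^*))$, while the crossing point $p$ of $ac$ and $bd$ lies either inside this quadrangle (an augmented X-configuration) or outside it (an augmented B-configuration). I will use two facts repeatedly: the skeleton edges are planar, hence uncrossed, and each crossing edge, being already crossed once, cannot be crossed again; thus every quadrangle edge and every half-edge from a real vertex to $p$ is \emph{impermeable}. I prove the two assertions separately, handling the X-configuration statement first because it is the clean one and then reusing its idea for the B-configurations.

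For an X-configuration the crossing $p$ lies inside $abcd$, and the two crossing segments split the interior disk into the four triangles $abp,bcp,cdp,dap$, each bounded by one quadrangle edge and two impermeable half-edges. Suppose a vertex $v$ lies inside one of them, say $abp$. Since all three sides of $abp$ are impermeable and $p$ is merely a crossing point, every edge leaving $v$ stays inside the closed triangle, so any path from $v$ to the rest of $G$ must pass through $a$ or $b$. Then $\{a,b\}$ is a $2$-separator, contradicting $3$-connectivity. Hence each triangle is empty, so the skeleton of every X-configuration contains no vertex, which is exactly the second assertion.

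For the B-configurations the same idea gives a first reduction. If $p$ lies outside $abcd$, then the spokes $pa,pb,pc,pd$ wrap around the quadrangle and, together with pairs of adjacent quadrangle edges, cut off several bounded ``bulge'' regions on the outer side; each such region is bounded by two impermeable half-edges meeting at $p$ and one or two quadrangle edges, so the identical separator argument forces it to be empty. The heart of the matter is then a global re-embedding argument. Because $G$ is $3$-connected, its skeleton is essentially a $3$-connected planar graph whose embedding is unique up to reflection and the choice of the outer face (Whitney). For a kite in a B-configuration I will show that whenever the interior disk bounded by $abcd$ is empty, the two crossing edges can be rerouted through that disk so that $p$ moves inside the quadrangle, converting the B-configuration into an X-configuration without changing $G^*$ or violating $1$-planarity. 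The only kites resisting this flip are those whose quadrangle is a separating $4$-cycle with material on both sides; I will argue that at most one such kite survives and that, by choosing the outer face to contain its crossing, it is realised as the unique B-configuration in the outer face.

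The step I expect to be the main obstacle is precisely this last count: showing that the re-embedding eliminates \emph{all but one} B-configuration and that the survivor can be sent to the outer face. This requires analysing how the wrap-around spokes of a B-configuration occupy the faces of the skeleton together with the structure of separating $4$-cycles in a $3$-connected planar graph, in particular ruling out two independent, non-flippable B-configurations by exhibiting a small separator or a forced second crossing on some edge whenever two such configurations coexist in bounded faces. By contrast, the X-configuration part is immediate from the impermeability and $2$-cut observation above.
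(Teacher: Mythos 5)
The paper offers no proof of this lemma at all --- it is imported verbatim from \cite{ABK13} --- so the only question is whether your blind attempt actually establishes the statement. It establishes half of it. The second assertion (every augmented X-configuration has an empty skeleton) is handled correctly: the quadrangle edges are planar and hence uncrossed in the fixed embedding, each half-edge belongs to an edge already crossed once, so the three sides of each triangle $abp$, $bcp$, $cdp$, $dap$ are impermeable, and a vertex inside one of them would make the corresponding pair such as $\{a,b\}$ a $2$-separator, contradicting $3$-connectivity. That is the standard argument and it is sound.

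The first assertion --- at most one augmented B-configuration survives normalization, and it lies in the outer face --- is the substance of the lemma, and your proposal does not prove it; you explicitly label it ``the main obstacle'' and describe only what you ``will show'' and ``will argue.'' Two concrete gaps remain. First, your flip criterion is unsubstantiated: you claim a B-configuration can be rerouted into an X-configuration whenever the disk bounded by $abcd$ is empty, but you neither verify that the reroute keeps every \emph{other} edge crossed at most once nor characterize when that disk is empty; moreover ``choosing the outer face to contain its crossing'' is a re-embedding step that the normalization of the \emph{given} embedding $\emb(G)$ does not obviously license. Second, the counting step is entirely absent: there is no argument that two non-flippable B-configurations cannot coexist. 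The natural way to close this is not via separating $4$-cycles of the skeleton but via the closed Jordan curve formed by the base edge $(a,b)$ of a B-configuration together with the two half-edges $a$--$p$ and $b$--$p$: this curve meets $G$ only in $a$ and $b$ and is impermeable, so $3$-connectivity forces one of its two sides to be vertex-free; since the side containing $c$, $d$ and the quadrangle is not empty, the other side must be a face, and only the outer face can play that role for more than a degenerate configuration --- with a second such configuration again yielding a $2$-separator. As written, your text for this half is an outline with the central claim asserted rather than proven, so the proposal does not constitute a proof of the lemma.
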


\begin{comment}

%\subsection{Book Embeddings}

A book embedding of a graph $G=(V,E)$ is associated with a linear layout $L$ of the vertices of $G$,
 which is an ordering of $V$; i.e., a one-to-one function from $V$ to $\{1, \ldots, n\}$, where $n=|V|$.
 We say that two edges $(a, b), (c, d)\in E$ conflict in the layout $L$ if $L(a)< L(c) < L(b)< L(d)$ or
 $L(c) < L(a) < L(d) < L(b)$. Equivalently if we place the vertices on a horizontal line ordered by $L$
 and draw the edges above this line, two edges conflict if and only if they intersect. A \textit{book
 embedding} of $G$ on $k$ pages consists of a linear layout $L$ of its vertices, and a coloring of
 its edges with $k$ colors (the ``pages'') so that conflicting edges in $L$ receive different colors (pages).

Recall that a graph $G$ has a one page book embedding if and only if
$G$ is a subgraph of an outerplanar graph, and $G$ has a two page
book embedding if and only if $G$ is  planar subhamiltonian, i.e.,
$G$ is a subgraph of a planar graph with a Hamiltonian cycle~\cite{bk-btg-79}.

\end{comment}

\section{Book Embeddings of 3-Connected 1-Planar Graphs}
\label{sec:3-connected}

If a graph can be embedded in a given number of pages, the same is true for its subgraphs. Given an
 embedded 3-connected 1-planar graph $G$, we therefore assume that $G$ is a normal planar
 maximal 1-planar graph. Lemma~\ref{lem:embedding} implies that the planar skeleton of a normal
 planar maximal 3-connected 1-planar graph $G$ contains only triangular and quadrangular faces.
 Furthermore if we remove exactly one crossing edge (arbitrarily) from each pair of crossing edges in
 $G$, then the resulting graph is a maximal planar graph.

Our algorithm uses the a ``peeling technique'' similar to Yannakakis~\cite{Yanna89} and iteratively
 removes the vertices on the outer cycle of the planar skeleton $\ps(\emb(G))$ of $G$. This partitions the
 vertices of $G$ into \textit{levels} according to their ``distance'' from the outer face of the planar
 skeleton $\ps(\emb(G))$. Vertices on the outer face of $\ps(\emb(G))$ are at level 0. Deleting these
 vertices from $\ps(\emb(G))$ yields the level 1 graph; the vertices that lie now on the outer face are at
 level 1. In general,
 the level $t$ graph is obtained by deleting all vertices at levels less than $t$; the vertices that lie on
 the outer face of this graph are at level $t$. The edges of $G$ (including the crossing edges) are
 partitioned into \textit{level edges} at level $i$, edges that connect vertices at the same level $i$,
 and \textit{binding edges}, edges that connect vertices at different levels. The fact that a level $i$
 vertex is not on the outer face after deleting the first $i-2$ levels implies that every level $i$ vertex
 lies in the interior of some cycle composed of level $i-1$ vertices. This means in particular that a
 level $i$ vertex cannot be adjacent to a level $j$ vertex with $j < i-1$ and binding edges connect
 only consecutive levels.

Similar to Yannakakis~\cite{Yanna89} we first place level 0 vertices in the clockwise order (cw-order)
 as they appear on the outer cycle, assigning the edges on the outer cycle on the same page.
% For a cycle $C$, we say that a book embedding $\Gamma$ embeds $C$ on a single page if the vertices
% of $C$ are placed in the clockwise or counterclockwise order as they appear on $C$ and all the edges of
% $C$ are assigned to the same page in $\Gamma$. Thus we have placed the outer cycle of $G$ on a
% single page.
 Then we place the level 1 vertices and assign the following edges to some pages:
 (i) the level edges of each cycle on the outer boundary of the level 1 graph
 (ii) the binding edges between levels 0 and 1
 (iii) the crossing edges either at level $0$ or binding between level 0 and 1.
Level 1 vertices are placed in such a way that the vertices on each level 1 cycle are
 in the counterclockwise order (ccw-order) around the cycle.
% such that each level 1 cycle is again embedded on a single page.
 Now the rest of the graph is in the interior of level 1 cycles. The algorithm takes each
 level 1 cycle in turn and lays out its interior in a similar way.

We therefore next consider a \textit{2-level subgraph} $H$ of $G$ defined as follows. The vertices of $H$
 are the vertices on a level $i$ cycle $C_i$ and all the level $i+1$ vertices $V_{i+1}$ interior to $C_i$.
% and
 The edges of $H$ are all the planar and crossing edges inside the region between $C_i$ and the outer
 boundaries of all the level $i+1$ components inside $C_i$ (including the edges on $C_i$ and the level
 $i+1$ boundaries). Fig.~\ref{fig:2-level} shows a 2-level subgraph inside a cycle $C_i=AB\ldots Z$.
 We denote this 2-level subgraph of $H$ inside $C_i$ as $H(C_i)$. We assume that $C_i$ has already
 been embedded where the vertices of $C_i$ are placed in the cw (or ccw, resp.)
 order around $C_i$. We then extend this embedding to a book embedding of $H(C_i)$, by
 placing the remaining vertices of $H(C_i)$ and assign the remaining edges of $H(C_i)$ to seven pages.
The book embedding of $G$ is obtained by repeatedly computing the book embeddings of
 $H(C_i)$ and reusing the same seven pages for all odd (even) $i$.

% such that
% the vertices on each level $i+1$ outer cycle inside $C_i$ are placed in the counterclockwise (clockwise
% resp.) order around the cycle.
% is embedded on a single page.

%\hspace{-3cm}{\bf SK: Missing ending of this sentence! Which pages? How many pages?}

\subsection{Drawing 2-Level Subgraphs}

In this section we prove the following lemma.

\begin{lemma}
\label{lem:2-level}
Let $H(C_i)$ be a 2-level subgraph of $G$ inside
a level $i$ cycle $C_i$. Then there exists a book embedding
$\Gamma$ of $H(C_i)$ on seven pages where the vertices of
 $C_i$ are placed in the cw (or ccw) order around $C_i$.
% Let $\Gamma(C_i)$ be an
% embedding of $C_i$ on a single page. Then $\Gamma(C_i)$ can be extended to obtain a book
% embedding $\Gamma$ of $H(C_i)$ on six pages such that each level $i+1$ cycle inside $C_i$
% is embedded on a single page in $\Gamma$.
\end{lemma}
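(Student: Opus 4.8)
The plan is to fix the linear order of the vertices first and then color the edges. Following the level-0/level-1 step, I would keep the vertices of $C_i$ on the spine in the prescribed cw order and insert the interior level-$(i+1)$ vertices among them, placing the vertices of each inner level-$(i+1)$ cycle consecutively and in ccw order within the arc of $C_i$ to which that cycle attaches. The structural input that makes this consistent is Lemma~\ref{lem:embedding}: the planar edges of $H(C_i)$ triangulate the annulus between $C_i$ and the inner cycles into triangles and into the quadrangles that carry the crossing pairs, so each inner cycle attaches to $C_i$ through binding edges whose outer endpoints form a contiguous arc, and around each inner cycle these binding edges occur in monotone order. Fixing the inner cycles as ccw while $C_i$ is cw is exactly the orientation that keeps the binding edges from fanning out in both directions.

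With the order fixed, I would split the edges of $H(C_i)$ into four groups and bound the pages needed for each: (i) the edges of $C_i$; (ii) the edges of the inner level-$(i+1)$ cycles; (iii) the binding edges; and (iv) the crossing edges, which come in pairs, each pair sitting inside one quadrangular face of the planar skeleton as its two diagonals. Groups (i) and (ii) are cheap: the edges of a single cycle, drawn as arcs over the spine, are pairwise nested or disjoint and hence fit on one page, and since distinct inner cycles occupy disjoint spine intervals all of group (ii) shares a page as well. For group (iii) I would use a Yannakakis-style left/right split: after the cw/ccw placement, the binding edges incident to each inner component separate into a ``left'' class and a ``right'' class according to the side of their inner endpoint, each class is crossing-free, and a constant number of pages suffices.

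The real work, and the step I expect to be the main obstacle, is group (iv). Because the quadrangular faces of the skeleton have pairwise disjoint interiors, a crossing edge conflicts only with the planar boundary of its own quad and with crossing edges whose quads share an overlapping spine interval; the delicate configuration is a quad that straddles $C_i$ and an inner cycle, where a diagonal is itself a binding-type edge and can interact with both groups (iii) and (iv). I would assign the two diagonals of each quad to two dedicated pages by orienting them consistently --- one page for the diagonal running ``clockwise'' across its quad and one for the other --- and then verify, face by face from the normal form (the augmented X- and B-configurations), that two diagonals placed on the same page never interleave in the spine order.

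Finally I would account for the total. The naive sum over the four groups exceeds seven, so the argument must show that several groups can share pages: for instance the $C_i$-page can absorb some binding edges, and the two crossing-edge pages can be reused by cycle or binding edges that provably lie in disjoint spine intervals. Making this sharing precise --- pinning down exactly which pages of groups (i)--(iii) are free to also carry the diagonals of group (iv), especially around the straddling quads --- is where the bound of seven is won or lost, and is the part of the proof I would expect to demand the most careful case analysis.
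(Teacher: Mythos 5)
There is a genuine gap, and it starts with your edge decomposition. The paper does not keep both crossing edges of each pair in a separate ``diagonals'' group: it removes exactly \emph{one} edge from each crossing pair (a set $X$, chosen by a five-way case rule S1--S5 that depends on whether each edge of the pair is a level-$i$ edge, a level-$(i{+}1)$ edge, or a binding edge), so that the remainder $H'$ is a planar graph whose vertex order and 3-page edge assignment come verbatim from Yannakakis. Your groups (i)--(iii) together with one edge per crossing pair are exactly this $H'$, and they cost three pages, not the ``1 + 1 + small constant'' you budget --- the forward binding edges conflict with the block edges of the cactus $D$ of inner vertices, which is why Yannakakis alternates block assignments between two pages along the block-cut tree. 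More seriously, the remaining half of the crossing edges needs \emph{four} further pages in the paper, not the two you propose for \emph{all} of them. Your ``one page per diagonal orientation'' heuristic fails because conflicts between crossing edges are not local to quads with overlapping interiors: a binding edge of $X$ from a block's leader to the vertex following its dominator spans the entire spine interval of that block and interleaves with the block's own level-$(i{+}1)$ crossing edges, and crossing edges joining two different blocks through a cut vertex (the ``first'' and ``last'' crossing edges of a block) interleave with each other and with binding edges of $X$ in ways that force the paper's separate treatment: page $c_1$ for non-forbidden binding and level-$i$ edges, $c_2$ for first crossing edges (shown conflict-free via an outerplanarity argument), and $c_3, c_4$ for last crossing edges and forbidden binding edges, alternated by the parity of the block's depth in the block-cut tree and justified by exhibiting a planar Hamiltonian supergraph.

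You also correctly identify that the page accounting is where the bound is won or lost, but you leave that step open, and it is precisely the hard part: the selection rules S1--S5 are engineered so that the removed edges avoid leaders and dominators in the right way (this is what makes the region argument in the paper's Lemma~\ref{lem:c1} go through), and no amount of sharing between your groups (i)--(iii) and the diagonal pages substitutes for choosing \emph{which} edge of each pair to extract in the first place. As written, the proposal would not compile into a proof of the seven-page bound.
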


\begin{figure}[t]
%\vspace{-0.4cm}
\centering
\includegraphics[width=0.8\textwidth]{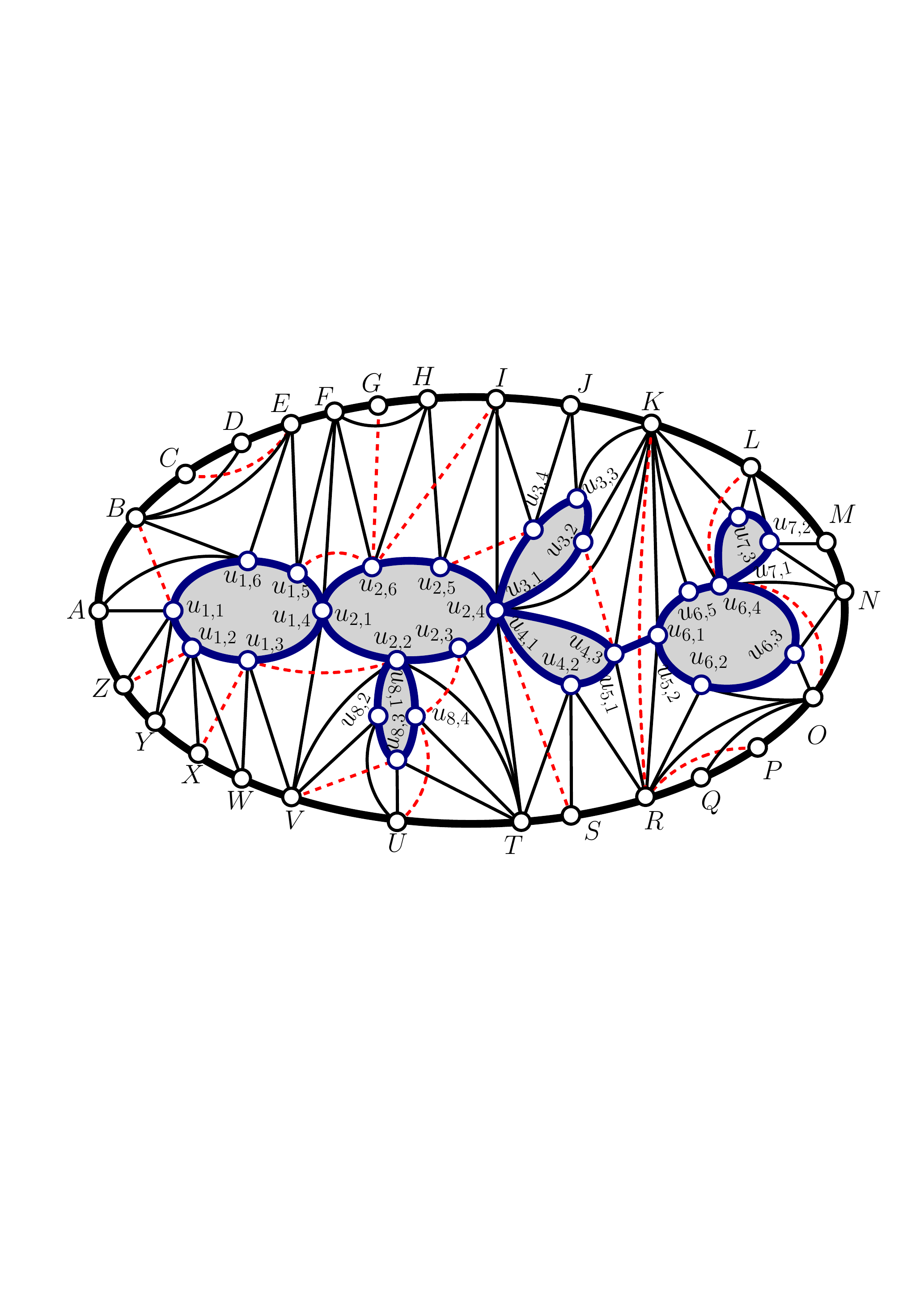}
\caption{A 2-level subgraph $H(C_i)$ of $G$ inside the level-$i$ cycle $C_i=AB\ldots Z$, which is
 drawn with thick black edges. The outer boundary of the level $i+1$ component is drawn with thick
 blue edges. The red dashed edges are the crossing edges taken in the set $X$.}
\label{fig:2-level}
\end{figure}

We give a construction of a book embedding where the vertices of $C_i$ are placed in the cw-order
 (for ccw-order we flip the embedding of $H(C_i)$).
 %In order to find a drawing where they are placed in the counterclockwise order, just flip the
 %embedding of $H(C_i)$ on the plane by taking a reflection with respect to any straight-line and then
 %follow the same algorithm.
Let $v_1, \ldots, v_t$ be the vertices of $C_i$ in the cw-order around $C_i$. For the remaining part
 of this section, we call the vertices on $C_i$ as the \textit{outer vertices} and the level $i+1$ vertices
 of $H(C_i)$ as the \textit{inner vertices}.
We first obtain a planar graph $H'$ from $H(C_i)$ by removing exactly one edge from each
 pair $\langle (a,b), (c,d)\rangle$ of crossing edges. Let $X$ be the set of crossing edges that we remove.
 From each crossing edge pair $\langle (a,b), (c,d)\rangle$, we take one edge
% of the two edges $(a,b)$, $(c,d)$
 to be in $X$ as follows; see Fig.~\ref{fig:2-level}.

\noindent
\textbf{Case S1.} If both $(a,b)$, $(c,d)$ are level edges at level $i$, then we take the edge adjacent to the
  vertex farthest from $v_1$ in cw-order on $C_i$ to be in $X$. In particular if the two level $i$ edges
 forming the crossing pair are $(v_p, v_r)$, $(v_q, v_s)$ with $p<q<r<s$, then we take the edge
 $(v_q,v_s)$ to $X$; for example we take the edge $(C, E)$ to $X$ in Fig.~\ref{fig:2-level}.

%{\color{red}[[[MJA: For Case S1, we can actually take either edge from the pair arbitrarily.]]]}

\noindent
\textbf{Case S2.} If both $(a,b)$, $(c,d)$ are binding edges, then we again choose the edge adjacent to
 the vertex farthest from $v_1$ in cw-order on $C_i$ to be in $X$. In particular, if the two binding edges
 forming the crossing pair are $(v_p, u)$, $(v_q, u')$, where $p<q$ and $u,u'$ are level $i+1$ vertices,
 then we take the edge $(v_q,u')$ to be in $X$; for example we take the edge $(I, u_{2,6})$ to $X$ in
 Fig.~\ref{fig:2-level}.

\noindent
%\textbf{Case 3.} If one of $(a,b)$, $(c,d)$ is a level edge at level $i$, and the other is a binding edge,
% then we choose the binding edge to be in $X$.
\textbf{Case S3.} If one of $(a,b)$, $(c,d)$ is a level edge at level $i$, and the other is a binding edge,
 then we choose the binding edge to be in $X$;  for example we take the edge $(G, u_{2,6})$ to $X$ in
 Fig.~\ref{fig:2-level}.

\noindent
\textbf{Case S4.} If one of $(a,b)$, $(c,d)$ is a level edge at level $i+1$, and the other is a binding edge,
 then we choose the level edge at level $i+1$ to be in $X$; for example we take the edge
 $(u_{1,5}, u_{2,6})$ to $X$ in Fig.~\ref{fig:2-level}.

\noindent
\textbf{Case S5.} If one of $(a,b)$, $(c,d)$ is a level edge at level $i$, and the other is at level $i+1$, then
 we choose the level edge at level $i$ to be in $X$; for example we take the edge $(K, R)$ to $X$ in
 Fig.~\ref{fig:2-level}.

Note that due to the construction of $H(C_i)$, the pair of crossing edges cannot both be level edges at
 level $i+1$. Thus the above cases account for all possible pairs.

We then use the algorithm by Yannakakis~\cite{Yanna89} to obtain a book embedding of $H'$ on three
 pages, and we add the crossing edge from $X$ on four additional pages so that no two edges assigned
 to the same page cross each other on that page. Before we describe how
 to add the crossing edges, we describe the 3-page embedding of $H'$.
 % using the algorithm by Yannakakis~\cite{Yanna89}. 
 Denote the three pages as
 $p_1$, $p_2$ and $p_3$, and denote the four additional pages for the crossing edges as $c_1$,
 $c_2$, $c_3$ and $c_4$.
%We now obtain a book embedding $\Gamma$ where  the vertices of $C_i$
% are placed in the clockwise order $v_1, \ldots, v_s$ around $C_i$. The proof where we place
% the vertices in the counterclockwise order is analogous.
%Also assume that $\Gamma(C_i)$ embeds $C_i$ on the page $p_1$, where
% the vertices are placed in the clockwise order $v_1, \ldots, v_s$ around $C_i$.
 Denote by $D$ the subgraph of $H(C_i)$ induced by the vertices at level $i+1$. Assume without loss
 of generality that $D$ induces a connected graph, since otherwise each connected component of $D$
 would be inside a different cycle induced by the vertices of $C_i$ and these can be handled separately.
 By construction then, each biconnected block of $D$ is a simple cycle (i.e., $D$ is a {\em cactus graph}).
 Let $B_1, \ldots, B_s$ be these blocks of $D$ and let $\TT$ be the block-cut tree for $D$.
 We now show how we place all the level $i+1$ vertices and assign the edges in $H'$ and in $X$
 to the seven pages $p_1$, $p_2$, $p_3$, $c_1$, $c_2$, $c_3$ and $c_4$.

%\vspace{-.4cm}
\subsubsection{Placement of Vertices}

We say that a vertex $u$ \textit{sees} an edge $(v, w)$ if $uvw$ forms a triangular face in $H'$.
 We say that an outer vertex \textit{sees} a block $B_j$ of $D$ if it sees an edge of the block.
Consider the triangular inner face containing the edge $(v_1, v_t)$ of $C_i$. The third node of
 this face $u_{1,1}$ is called the first inner node and assume the block $B_1$ containing $u_{1,1}$
 is the \textit{first block}\footnote{Assume $u_{1,1}$ is in a unique block;
 otherwise take as $B_1$ a block that has $u_{11}$ and is seen by $v_1$.}. Then consider
 the block-cut tree $\TT$ as a rooted tree by taking $B_1$ as its root.

For each block $B_j$ of $D$, define the \textit{leader} of $B_j$ to
be the first vertex of $B_j$
 in any path from $u_{1,1}$ to any vertex of $B_j$. Thus, the leader of the root block $B_1$
 is $u_{1,1}$; for any other block $B_j$, the leader of $B_j$ is the common vertex between $B_j$
 and its parent in $\TT$. Although an inner vertex of $H'$ (in particular a cutpoint of $D$) may
 belong to more than one block, we assign each inner vertex $u$ to a unique block  by assigning
 it to the \textit{highest} (i.e., closest to the root) block in the tree $\TT$ that contains it. Thus, the root
 block $B_1$ of $\TT$ is assigned all its vertices; each remaining block is assigned all its vertices
 except its leader. The \textit{dominator} of a block $B_j$ is the first outer vertex (in the order
 $v_1, \ldots, v_t$) adjacent to a vertex assigned to $B_j$. %; denote it as $dom(B_j)$.

We first place the outer vertices
% (the vertices of $C_i$)
 in the order $v_1, \ldots, v_t$  in $\Gamma$.
Next we place the inner vertices
% of $H'$
 in between these outer vertices using the 
%following the
 vertex placement order in~\cite{Yanna89}, which we describe here. The inner vertices assigned to
 each block $B_j$ are placed right after the outer node $v_k$ that
 dominates $B_j$ (i.e., between $v_k$ and $v_{k+1}$). If an unique block $B_j$ is dominated by $v_k$,
 then its vertices are placed between $v_k$ and $v_{k+1}$ in the ccw-order around its
 boundary.
 If more than one block has a common dominator $v_k$, this set $S$ of blocks forms a directed path
 in $\TT$. 
 %Yannakakis~\cite{Yanna89} places the 
 Vertices in these blocks are placed between $v_k$ and $v_{k+1}$ using one of two methods.
 In the \textit{nested method}, the vertices are placed in the order they are first
 encountered while traversing the boundary of the subgraph induced by the blocks in $S$ in
 ccw-order, starting with the leader of the highest block in $S$.
 In the \textit{consecutive method}, the vertices assigned to each block are placed consecutively in
 ccw-order around its boundary; the blocks are ordered one after the other in top-down
 order of $\TT$: first the vertices assigned to the highest block, then the ones assigned to its child,
 and so on. For the following description, assume that we follow the consecutive method, % of Yannakakis,
 (the algorithm is analogous with the nested method). We thus obtain the ordering of the
 vertices of $H'$ for the book embedding $\Gamma$; see Fig.~\ref{fig:embedding-planar}.

\begin{figure}[h]
\centering
\includegraphics[angle=90,width=\textwidth]{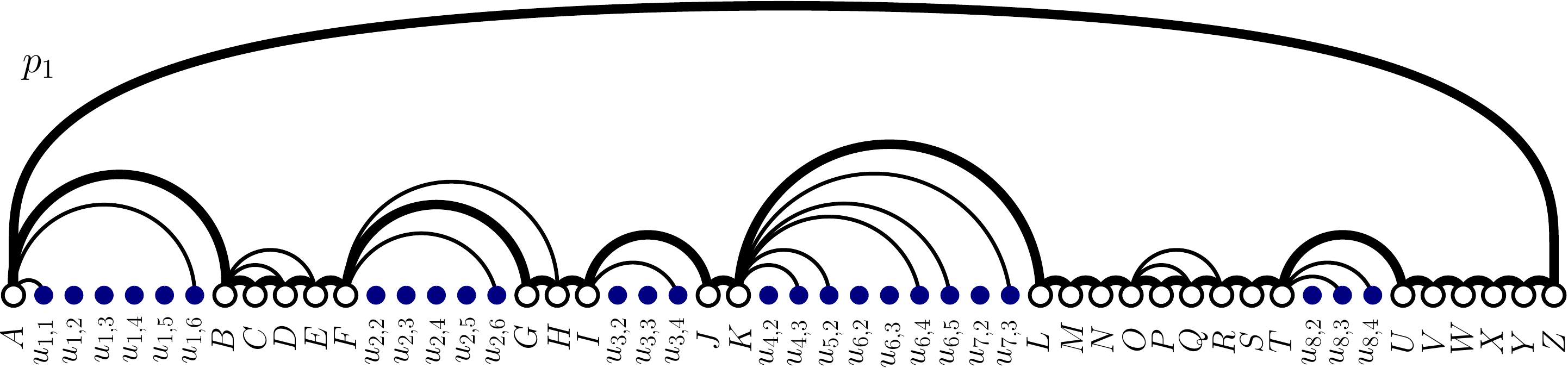}
\includegraphics[angle=90,width=\textwidth]{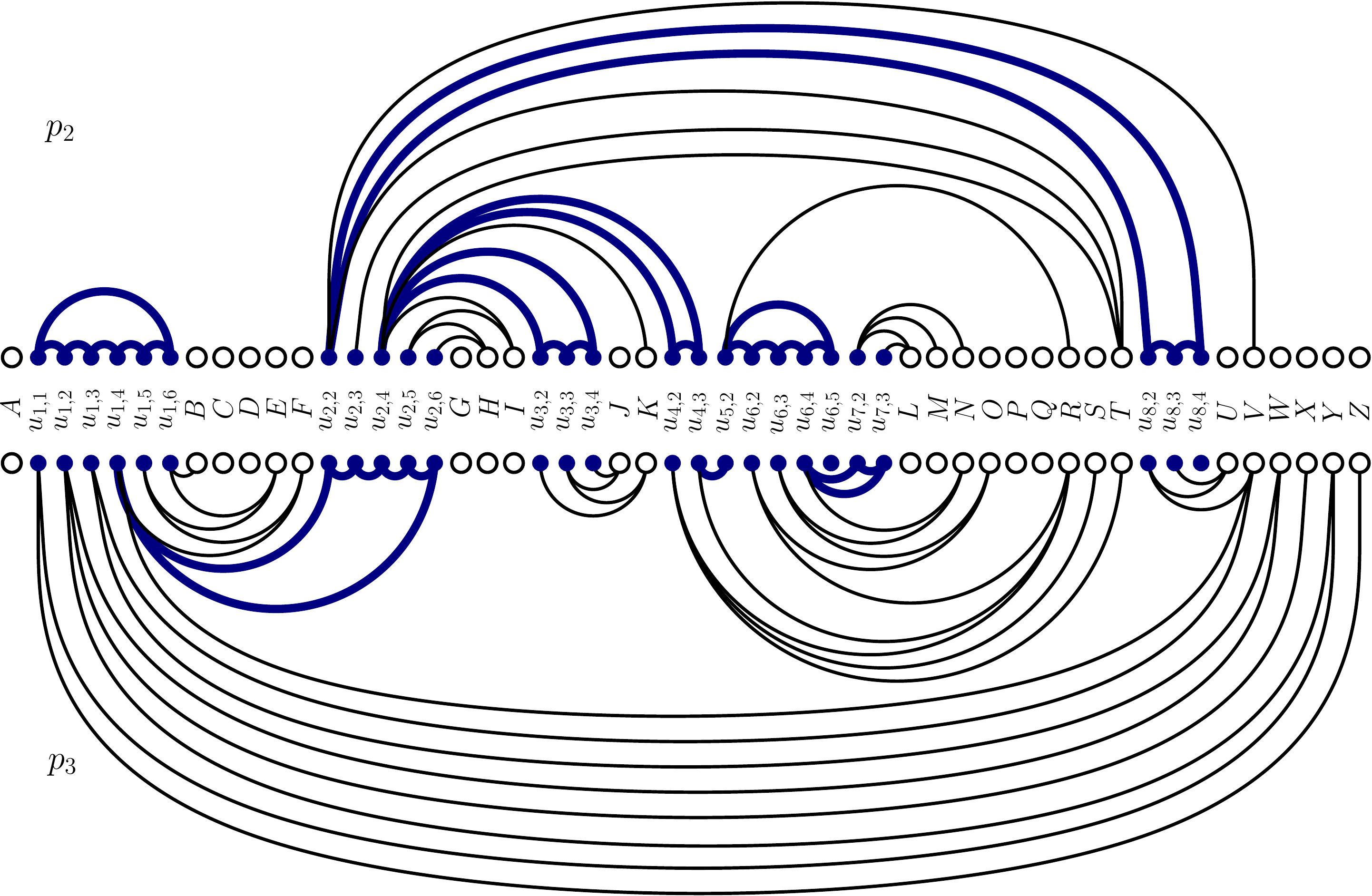}
\caption{Book embedding of the planar edges in $H'$ for the 2-level graph $H(C_i)$ in Fig.~\ref{fig:2-level}
 on the three pages $p_1$, $p_2$, $p_3$.}
\label{fig:embedding-planar}
\end{figure}

\subsubsection{Assigning Edges in $H'$ to Pages}

Next we assign the edges of $H(C_i)$ to the seven pages. For a vertex $v$ of $H(C_i)$, let $\Gamma(v)$
 denote its rank in the ordering of $\Gamma$. Consider two edge $(a,b)$, $(c,d)$ of $H(C_i)$ with
 $\Gamma(a)<\Gamma(b)$ and $\Gamma(c)<\Gamma(d)$. We say that there is a conflict between these
 two edges in $\Gamma$ if $\Gamma(a)< \Gamma(c) < \Gamma(b)< \Gamma(d)$ or
 $\Gamma(c) < \Gamma(a) < \Gamma(d) < \Gamma(b)$. We now assign the edges of $H(C_i)$ on seven
 pages such that there is no conflict between any two edges assigned to the same page.

First we assign the edges of $H'$ to the three pages $p_1$, $p_2$ and $p_3$. In order to see this
 assignment of edges to pages, consider $H'$ as a directed (acyclic) graph by taking
 the following orientation of edges.
% The outer cycle $C_i$ is traversed in cw-order starting from
% $v_1$ and the edges are oriented accordingly (with the exception of the final edge $(v_1, v_t)$, which
% is oriented from $v_1$ to $v_t$).
The level edges $(v_p, v_q)$ at level $i$, with $p<q$ are oriented from $v_p$ to $v_q$ (including the
 edge $(v_1, v_t)$, which is oriented from $v_1$ to $v_t$).
 On the other hand, each inner cycle is traversed in ccw-order,
 starting from its leader and the edges are oriented accordingly, with the exception of the final edge,
 which is oriented away from the leader.
% Each remaining level $i+1$ edge is between vertices $a$, $b$
% of two different blocks $B_j$, $B_{j'}$, respectively. We orient such an edge from $a$ to $b$ If the
% dominator of $B_j$ comes before the dominator of $B_j'$ in the clockwise order of $C_i$; otherwise
% we orient it from $b$ to $a$.
Each binding edge is oriented from the inner vertex to the outer
 vertex. The orientation of edges along with the placement of the vertices in $\Gamma$
 partitions all the edges of $H'$ in two types:  \textit{forward} edges have
 sources placed before their sinks in $\Gamma$ (the edge orientation is forward); the remaining
 edges are \textit{backward} edges (the edge orientation is backward).

Consider an assignment of the blocks of $D$ to the pages $p_2$ or $p_3$. The root block is assigned
% arbitrarily to either $p_2$ or $p_3$.
 to $p_2$.
 In the nested method, for each non-root block $B_i$, if $B_i$ has a
 different dominator than its parent then it is assigned to the opposite page ($p_2$ or $p_3$) than that of
 its parent, otherwise it is assigned to the same page as its parent. In the consecutive method each
 non-root block $B_i$ is assigned a different page ($p_2$ or $p_3$) than that of its parent. Again we use
 the consecutive method for illustration. We assign all the edges of $H'$ to the three pages $p_1$, $p_2$,
 $p_3$ as follows; also see Fig.~\ref{fig:embedding-planar}.% in the Appendix.

\begin{itemize}
 \item The edges of $C_i$ and all the backward binding edges of $H'$ is assigned to page $p_1$.
 (These are the only edges of $H'$ assigned to $p_1$.)

 \item For each block $B_j$ the level edges in $B_j$ are assigned to the page that the block itself is assigned to.

 \item  For each forward binding edge $e=(u, v)$, where $v$ is on $C_i$ and $u$ is on some block
 $B_j$,  edge $e$ is assigned to page $p_2$ or $p_3$, opposite to the one assigned to block
 $B_j$.

\end{itemize}

This assignment of edges of $H'$ creates no edge conflicts in any of the three pages~\cite{Yanna89}.

\subsubsection{Assigning Edges in $X$ to Pages}

We now assign the edges in $X$ to the four pages $c_1$, $c_2$, $c_3$ and $c_4$.
 We consider the following cases of a crossing edge $(a,b)$ in $X$.

\noindent
\textbf{Case D1: $(a,b)$ is a binding edge}. 
%Extending the convention for orienting the edges in $H'$,
% we orient the binding edges in $X$ also from the inner vertex to the outer vertex. This partitions them
 %in two two types:  \textit{forward} edges have sources placed before their sinks in $\Gamma$ (the
% edge orientation is forward); the remaining edges are \textit{backward} edges (the edge orientation
% is backward). If $(a,b)$ is a forward binding edge, we assign it to page $c_1$; see
A binding edge in $X$ is called \textit{forbidden} for some block $B_j$ if it is between two vertices $d$ and
 $v_{k+1}$, where $d$ is the leader of $B_j$, $v_k$ is the dominator of $B_j$, $v_{k+1}$ is the outer
 vertex just after $v_k$ and $v_k$ is not the dominator of any child block of $B_j$ in $\TT$.
We assign a binding edge $(a,b)$ to page $c_1$ if it is not forbidden for some block; see
 Fig.~\ref{fig:embedding-crossing}; otherwise we assign it to either page $c_3$ or page $c_4$ in Case D4.

\noindent
\textbf{Case D2: $(a,b)$ is a level $i$ edge}. In this case we assign $(a,b)$
 to page $c_1$; see Fig.~\ref{fig:embedding-crossing}.

\begin{figure}[t]
%\vspace{-0.5cm}
\centering
\includegraphics[angle=90,width=\textwidth]{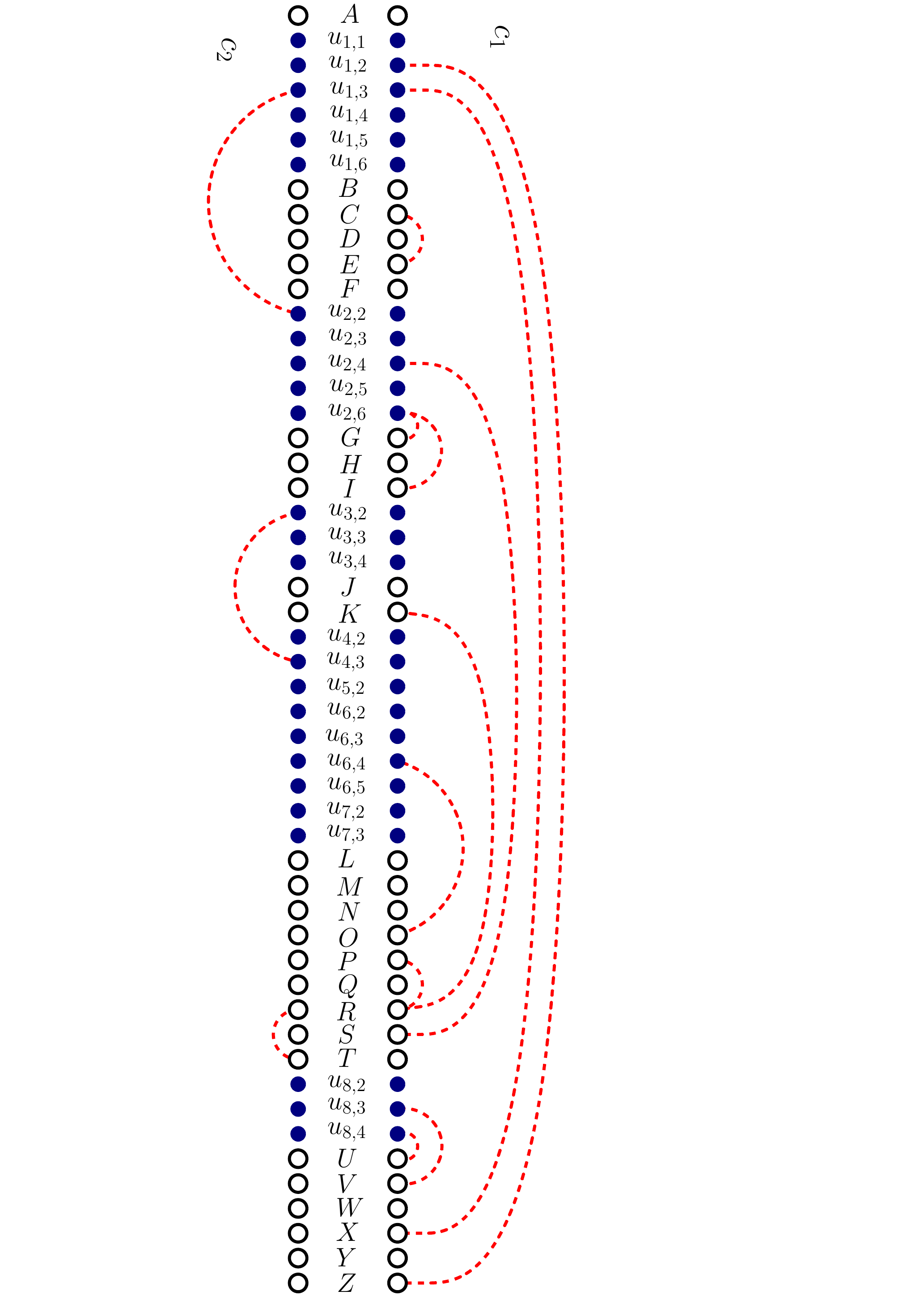}
\includegraphics[angle=90,width=\textwidth]{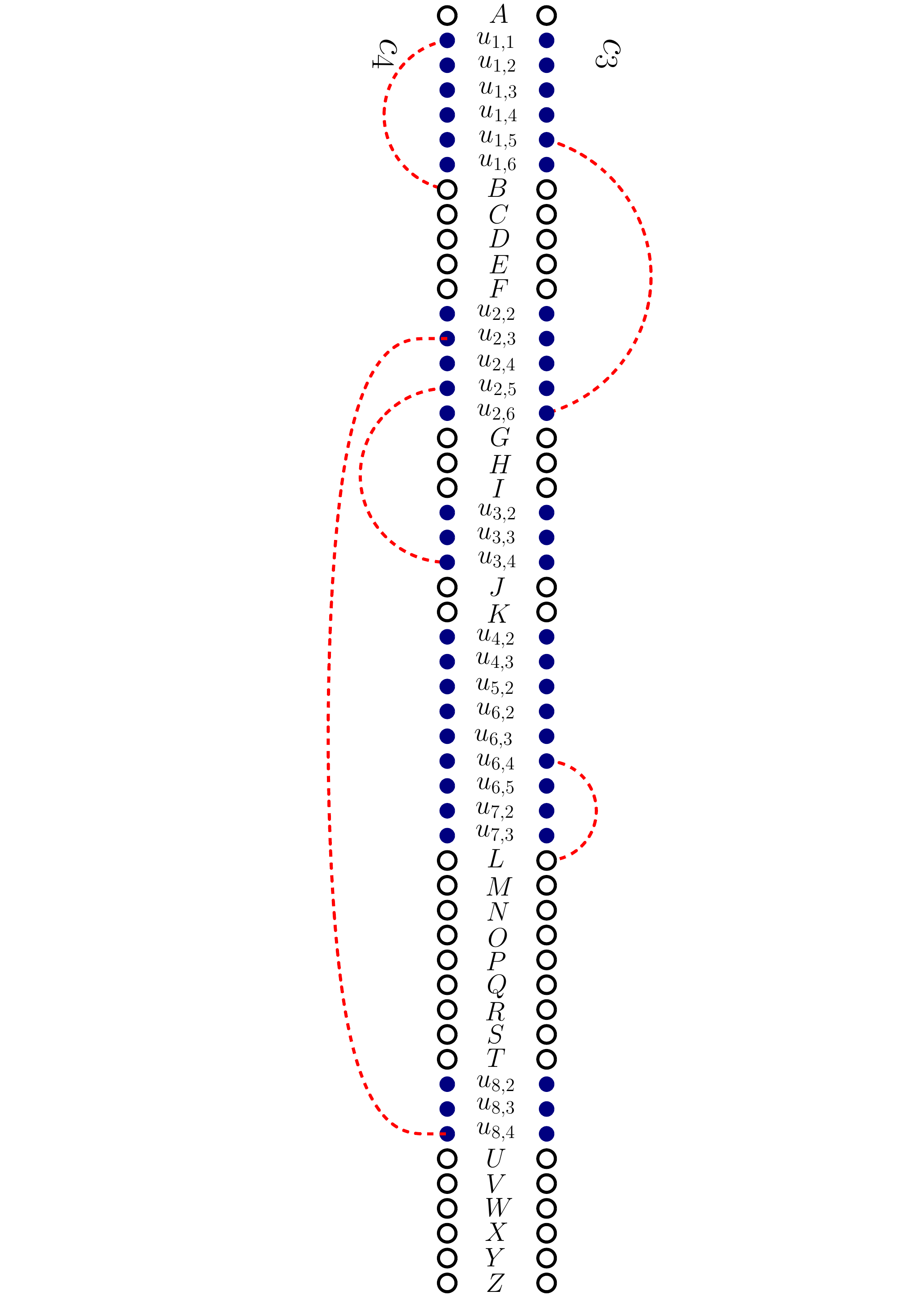}
\caption{Book embedding of the crossing edges in $X$ for the 2-level graph $H(C_i)$ in
 Fig.~\ref{fig:2-level} on the four pages $c_1$, $c_2$, $c_3$, $c_4$.}
\label{fig:embedding-crossing}
\end{figure}

\noindent
\textbf{Case D3: $(a,b)$ is a level $i+1$ edge}. In this case $(a,b)$ is crossed by a binding edge $(c,d)$,
 where one vertex (say $c$) is an outer vertex, and the other vertex (say $d$) is a cutvertex in $D$.
 The four vertices $a,b,c,d$ form a $K_4$ in $H(C_i)$ with skeleton $acbd$ whose interior is
 vertex-empty. Let $B_j$ and $B_{j'}$ be the two blocks of $D$ containing $a$ and $b$, respectively, with
 the common vertex $d$. Then, either one of $B_j$ and $B_{j'}$ is the parent of the other in $\TT$,
 where $d$ is the leader for the child block, or both $B_j$ and $B_{j'}$ are the children of a common
 parent block in $\TT$ and $d$ is the leader for both of them. In either case, assume without loss of
 generality that the dominator of $B_j$ comes before the dominator of $B_{j'}$ in the cw-order
 around $C_i$ (i.e., the dominator of $B_j$ is placed before that of $B_{j'}$ in $\Gamma$). This implies
 that the vertices of $B_j$ are all placed before the vertices of $B_{j'}$ except for its leader.
 Since $b$ is adjacent to the leader $d$ in $B_{j'}$, $b$ is either the first or the last vertex of $B_{j'}$
 (except for its leader) in $\Gamma$. We call the edge $(a,b)$, the \textit{first} (resp. \textit{last})
 crossing edge for the block $B_{j'}$. %, respectively.
 Note that if $(a,b)$ is the last crossing edge for $B_{j'}$, then $c$ is the dominator for $B_{j'}$.
 We assign a crossing edge in $X$ at level $i+1$ to page $c_2$ if it is the first crossing edge
 for some block $B_{j'}$; see Fig.~\ref{fig:embedding-crossing}.
 The last crossing edges of the blocks are assigned to either page $c_3$ or page $c_4$ in Case D4.

\noindent
\textbf{Case D4: the other case: $(a,b)$ is a forbidden binding edge for some block or the last
 crossing edge for some block}.
% If $(a,b)$ is a last crossing edge, then it is incident to the last vertex of
% some block $B_j$. We say that $(a,b)$ \textit{corresponds} to block $B_j$. On the other hand if $(a,b)$
% is a binding edge incident to the leader $d$ of some block, then the edge right after $(a,b)$ in the
% clockwise order of the incident edges around $d$ belongs to some block $B_j$. We again say that
% $(a,b)$ corresponds to block $B_j$. For each block there is at most one fo these edges that corresponds
% to the block (See Lemma~\ref{lem:last}).
Since the edges in $X$ do not cross each other, for each block $B_j$, there is at most on edge, which
 is either a last crossing edge or a forbidden binding edge for $B_j$. These edges are assigned to page
 $c_3$ or $c_4$ as follows. Consider the rooted block-cut tree $\TT$ for the blocks, rooted at $B_1$.
 For each block at the even (resp. odd) level of $\TT$, we assign its forbidden binding edge or last
 crossing edge (if any) to page $c_3$ (resp. $c_4$); see Fig.~\ref{fig:embedding-crossing}.

We now prove Lemma~\ref{lem:2-level} by showing that for any of the seven pages, there is no conflict
 between the edges assigned to it. This follows directly from~\cite{Yanna89} for the three pages $p_1$,
 $p_2$ and $p_3$, since the edges assigned to these three pages forms the planar graph $H'$ and
 the order of the vertices and the edge assignment on these three pages for $H'$ is exactly the same as
 in~\cite{Yanna89}. For the edges assigned to the remaining four pages $c_1$, $c_2$, $c_3$, $c_4$,
 we have the following Lemmas.

\begin{lemma}
There is no conflict between edges assigned to page $c_1$.
\label{lem:c1}
\end{lemma}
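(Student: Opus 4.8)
The plan is to show that the edges assigned to page $c_1$ fall into two families that are individually crossing-free and do not conflict with each other. By construction, page $c_1$ receives exactly the crossing edges handled in Case D1 (non-forbidden binding edges) and Case D2 (level $i$ edges). First I would analyze the level $i$ edges in $X$. Recall from Case S1 that when two level $i$ edges $(v_p,v_r)$ and $(v_q,v_s)$ with $p<q<r<s$ cross, we placed $(v_q,v_s)$ into $X$; and from Case S5 that when a level $i$ edge crosses a level $i+1$ edge, the level $i$ edge goes into $X$. Since all outer vertices $v_1,\dots,v_t$ appear in $\Gamma$ before all inner vertices, each level $i$ edge in $X$ spans a sub-interval of the outer vertices. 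I would argue that two such edges cannot conflict: any two chords among the $v_k$ that nest or are disjoint are fine, and a genuine crossing (interleaving) $\Gamma(v_p)<\Gamma(v_q)<\Gamma(v_r)<\Gamma(v_s)$ would force the two underlying level $i$ edges themselves to cross in the embedding, but then by Case S1 at most one of them lies in $X$, a contradiction.

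Next I would treat the non-forbidden binding edges assigned to $c_1$ in Case D1. The key is the vertex-placement rule: the inner vertices assigned to a block $B_j$ are placed immediately after the dominator $v_k$ of $B_j$, between $v_k$ and $v_{k+1}$. A binding edge $(a,b)\in X$ with $a$ an inner vertex assigned to $B_j$ and $b=v_m$ an outer vertex therefore has its inner endpoint sitting in the slot following $v_k$. I would show that for such an edge to be placed on $c_1$ at all (i.e., to not be \emph{forbidden}), its outer endpoint must reach back to $v_k$ or earlier, or its geometry must be such that it nests cleanly with the block's placement interval. The forbidden condition precisely excludes the one pathological configuration — the edge from the leader $d$ of $B_j$ to $v_{k+1}$ when $v_k$ is not also a dominator of a child block — so every binding edge surviving onto $c_1$ connects its inner endpoint to an outer vertex at or before its own block's dominator, keeping the inner endpoint's slot entirely inside the span of the edge. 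This makes the binding edges on $c_1$ pairwise nesting or disjoint with respect to the block-interval structure.

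Finally I would verify that a level $i$ edge and a binding edge on $c_1$ cannot conflict. A level $i$ edge $(v_p,v_q)$ lies entirely among the outer vertices, while a binding edge $(a,v_m)$ has its inner endpoint $a$ placed in the slot after some dominator $v_k$. A conflict would require $v_m$ and $a$ to straddle the interval $(v_p,v_q)$ in the interleaving pattern, i.e., $\Gamma(v_p)<\Gamma(v_m)<\Gamma(v_q)<\Gamma(a)$ or the symmetric arrangement. I would rule this out by returning to the 1-planar embedding: the level $i$ edge in $X$ and the binding edge in $X$ do not cross each other (edges in $X$ are pairwise non-crossing by choice of one edge per crossing pair), and the planarity of $H'$ together with the dominator placement forces the inner endpoint $a$ to sit inside the arc spanned by its attaching outer vertices, which cannot interleave with a disjoint or nested level $i$ chord.

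I expect the main obstacle to be the binding-edge case (Case D1): precisely characterizing why the \emph{forbidden} predicate is exactly the right condition to exclude all conflicts on $c_1$. The subtlety is that the placement rule groups several blocks sharing a common dominator $v_k$ into one slot, so I must track, for each such block, where its assigned inner vertices land relative to the leader and relative to sibling blocks in that slot. The careful bookkeeping is to show that the only way two non-forbidden binding edges could interleave is if one of them were in fact the forbidden edge from a leader to $v_{k+1}$, which is ruled out by definition; making this interval argument airtight against the multi-block-per-dominator situation is where the real work lies.
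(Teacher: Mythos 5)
Your first step --- that two level $i$ edges in $X$ cannot interleave among the outer vertices because they would then cross in the embedding, while $X$ contains only one edge per crossing pair --- is correct and is exactly the paper's opening observation. But the rest of the proposal does not close the lemma, and its central structural claim is backwards. You assert that a non-forbidden binding edge on $c_1$ ``connects its inner endpoint to an outer vertex at or before its own block's dominator.'' Since the dominator $v_k$ of a block $B_j$ is by definition the \emph{first} outer vertex adjacent to any vertex assigned to $B_j$, the outer endpoint of such a binding edge is always at or \emph{after} $v_k$; the edge runs forward from the inner vertex's slot (just after $v_k$) across the outer vertices $v_{k+1},\dots$ up to its outer endpoint (e.g.\ the edge $(I,u_{2,6})$ in the paper's running example). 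The forbidden predicate does not force edges to ``reach back''; it excludes a single degenerate short edge from the leader to $v_{k+1}$. Building the pairwise-non-conflict argument on the reversed picture would fail, and in any case you explicitly defer the multi-block-per-dominator bookkeeping as ``where the real work lies,'' which means the heart of the lemma is not proved.

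For comparison, the paper handles the hard part by fixing one binding edge $(x,v_x)$ on $c_1$ and cutting the interior of $C_i$ into three regions using the block $B_j$ containing $x$, the edge $(d,v_k)$ from its leader to its dominator, a planar path $P$ from the first inner vertex $u_{1,1}$ to $d$, and the edge $(x,v_x)$ itself. Because $P$, the block boundary, and the cycle $C_i$ are planar and the other $c_1$-edges cannot cross them (nor cross $(x,v_x)$, as $X$ is crossing-free), every other $c_1$-edge lies wholly inside one region, and a region-by-region check shows its two endpoints are both before, both between, or both after $x$ and $v_x$ in $\Gamma$. That is the mechanism your proposal is missing: a planar separation argument that converts non-crossing in the embedding into non-interleaving in the linear order, rather than an interval/nesting claim about dominators. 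Your third paragraph (level $i$ edge versus binding edge) gestures at this but gives no argument that would survive the case where the level $i$ edge's endpoints straddle the dominator of the binding edge's block.
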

%\begin{proof}
\textit{Proof.}
 The edges assigned to $c_1$ are the level $i$ edges and the binding edges in $X$, not
 incident to the leader of any block. We show that no two of them create a conflict.
\begin{wrapfigure}{r}{.42\textwidth}
%\begin{figure}
\vspace{-.4cm}
\centering
\includegraphics[width=0.4\textwidth]{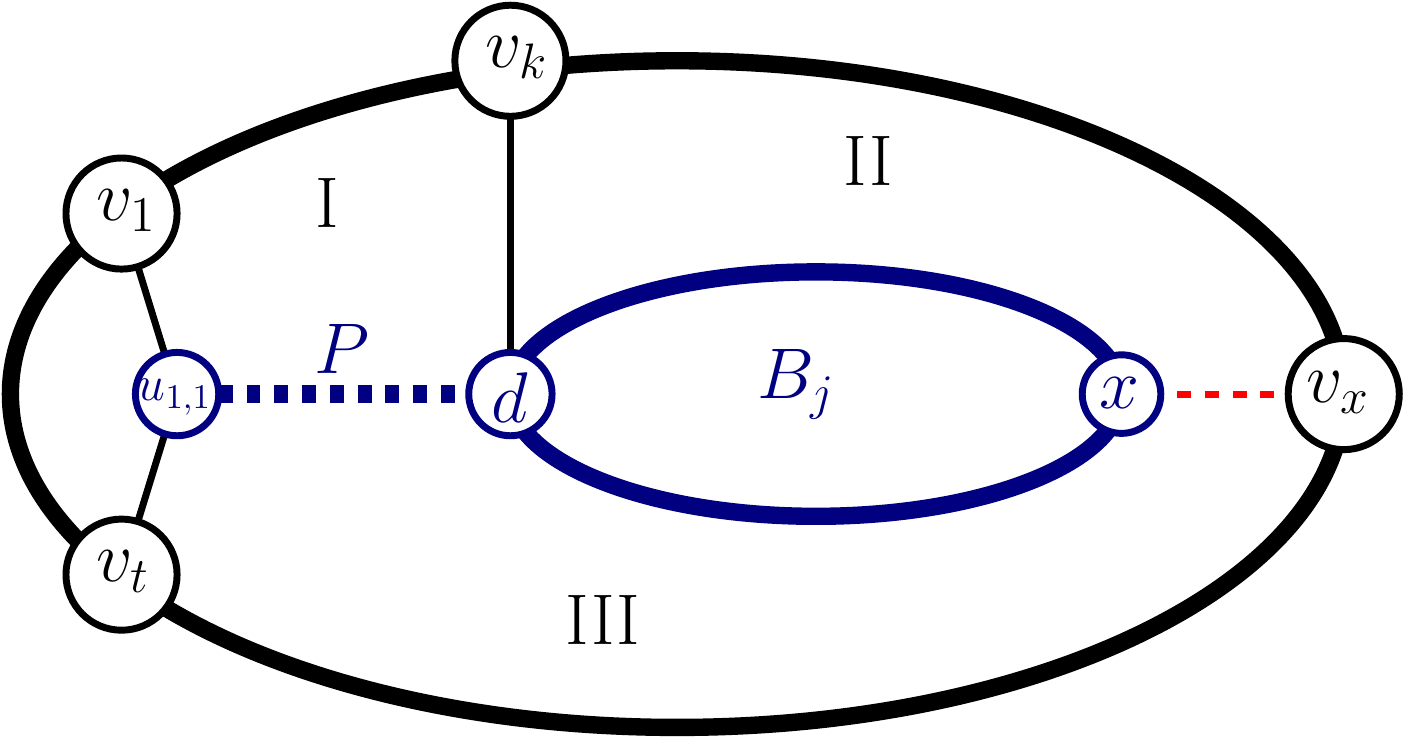}
\caption{Illustrations for the proof of Lemma~\ref{lem:c1}.
\vspace{-.4cm}
}
\label{fig:binding}
%\end{figure}
\end{wrapfigure}
Since the vertices of $C_i$ are placed in circular (clockwise) order of its boundary, and no two edges in
 $X$ crosses each other in the embedding $H(C_i)$ (only one edge from each crossing pair is taken),
 no two level $i$ edges in $X$ are in conflict with each other. Therefore it is sufficient to show that
 no binding edge in $X$ is in conflict with any other binding edge or level $i$ edge in $X$.

Consider a binding edge $(x,v_x)$ assigned to page $c_1$, where $v_x$ is an outer vertex and $x$ is an
 inner vertex; see
 Fig.~\ref{fig:binding}. Let $x$ is assigned to the block $B_j$. Let $v_k$ be the dominator of $B_j$
 and $d$ be the leader of $B_j$. Also consider a path $P$ from the first inner vertex $u_{1,1}$ to $d$ in
 the planar skeleton of $H(C_i)$ (the trivial path if $j=1$).
The block $B_j$, the two edges $(x,v_x)$ and $(d,v_k)$, along with the path $P$  and the two edges
 $(u_{1,1},v_1)$, $(u_{1,1},v_t)$ partitions the interior of $C_i$ in the following parts: (i)~the interior of
 $B_j$, (ii)~the interior of the triangle $(u_{1,1},v_1,v_t)$ and (iii)~the three regions marked by I, II and III
 in Fig.~\ref{fig:binding}.
% No forward binding edge crosses the edge $(d,v_k)$ and due to 1-planarity, no
% level $i$ edge also cross the edge $(d,v_k)$; since otherwise it would have to cross the edge from
% $v_k$ to the last vertex of $B_j$ also. Furthermore since
Since the path $P$ and the boundary of $B_j$
 belongs to the planar skeleton of $H(C_i)$ and since the edge $(x,v_x)$ is a crossing edge, each edge
 assigned to page $c_1$ is embedded in the interior of one of the three regions I, II or III.

%Now take any other binding edge $(y,v_y)$ in $X$.
% Let $\Gamma(v)$ denote the rank of any vertex $v$ in the ordering of $\Gamma$. 
All the level $i$ vertices in region I are placed on or before $v_k$ in $\Gamma$. Since $v_k$ is the
 dominator of $B_j$, it is placed before any vertex assigned to $B_j$, including $x$. Thus any level $i$
 edge in $X$ lying in region I has both their end-vertices placed before both $x$ and $v_x$, and hence
 does not create a conflict with $(x,v_x)$. One the other hand, all level $i+1$ vertices $y$ in region I
 including the ones on $P$ are also placed before $x$. Indeed if $B_{j'}$ is the block to which $y$ is
 assigned to, then either $B_{j'}$ is dominated by an outer vertex placed before $v_k$, or $B_{j'}$ is
 dominated by $v_k$, but its vertices are placed before those of $B_j$, following the consecutive
 (or the nested) method of placement. Thus both end-vertices of any binding edge in $X$ lying in region I
 are also placed before $x$ and $v_x$ and hence does not create any conflict with $(x,v_x)$.

Again all the level $i$ vertices in region II except for $v_k$ are placed after $x$ and before $v_x$. Similarly
 all the level $i+1$ vertices in region II except for $d$ are placed on or after $x$ and before $v_x$.
 Due to the way, we select the edges in $X$, no binding edge or level $i$ edge in $X$ lying
 in region II is incident to $v_k$. Furthermore no binding edge incident to $d$ are assigned to page
 $c_1$. Thus all the binding edges and level $i$ edges assigned to page $c_1$ have end-vertices placed
 between $x$ and $v_x$; hence they create no conflict with $(x,v_x)$.

All the level $i$ vertices in region III are placed on or after $v_x$. Thus all level $i$ edges in $X$ lying in
 region III have both their end vertices placed after both $x$ and $v_x$, and hence they create no conflict
 with $(x,v_x)$. On the other hand, the level $i+1$ vertices on $P$ or on the boundary of $B_j$ lying region
 III are placed before $x$ and the binding edge incident to them does not create conflict with $(x,v_x)$.
 Finally all the level $i+1$ blocks strictly in region III are dominated by the vertices placed on or after $v_x$.
 Indeed, the only possible planar edge crossing the region boundary would have been incident to the level
 $i$ vertex $v_{x-1}$ just before $v_x$, and it would have crossed the edge $(x,v_x)$. However in that
 case, the other end vertex of such an edge would have been on a block dominated by $v_{x-1}$ and $x$
 would have been its leader, which is a contradiction since the edge $(x,v_x)$ is assigned to page $c_1$.
Thus all the binding edges in region III incident to some level $i+1$ vertex neither on $P$ nor $B_j$, have
 both the end-vertices placed after $x$ and $v_x$, and hence they do not create conflict with $(x,v_x)$. \qed
%\end{proof}

\begin{comment}
\begin{figure}
\centering
\includegraphics[width=0.4\textwidth]{no-back.pdf}
\includegraphics[width=0.4\textwidth]{no-dom.pdf}\\
(a)\hspace{0.4\textwidth}(b)
\caption{Illustrations for the proof of Proposition~\ref{prop:same-set}~and~\ref{prop:follower}.}
\label{fig:follower}
\end{figure}
\end{comment}

For a planar Hamiltonian graph, the order of the vertices from a Hamiltonian cycle induces a 2-page
 book embedding~\cite{bk-btg-79}. Furthermore if the graph is outerplanar, then this order of the vertices
 on the outer cycle induces a 1-page book embedding. We use these two facts to show that there is no
 conflict on the pages $c_2$, $c_3$ and $c_4$.
% We only give a sketch here; the details are in the Appendix.

\begin{lemma} There is no conflict between edges assigned to the pages $c_2$, $c_3$ and $c_4$.
\label{lem:c234}
\end{lemma}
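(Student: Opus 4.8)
The plan is to treat page $c_2$ on its own and pages $c_3,c_4$ together, using the two facts recalled just before the lemma: an outerplanar graph gives a $1$-page embedding in its outer-boundary order, and a planar Hamiltonian graph gives a $2$-page embedding in its Hamiltonian order. Throughout I would first record the structural property of $\Gamma$ that drives everything: by the consecutive (or nested) placement rule, the non-leader vertices of each block $B_j$ occupy a contiguous interval of $\Gamma$, and these intervals form a laminar family mirroring the rooted tree $\TT$ (a descendant block's interval is nested inside its ancestor's, while blocks in disjoint subtrees occupy disjoint intervals).

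For page $c_2$, recall it carries only \emph{first} crossing edges, at most one per block $B_{j'}$, joining the first non-leader vertex $b$ of $B_{j'}$ to a neighbour $a$ of the leader $d$ of $B_{j'}$. Each such edge therefore sits at the ``left wall'' of the interval of $B_{j'}$, reaching back only as far as the leader, which lies in the parent interval. I would argue that, drawn in the planar embedding $H(C_i)$, these edges together with the incident skeleton edges can all be drawn with their endpoints on the boundary of a single face, i.e.\ they form an outerplanar graph whose outer boundary is read off in exactly the order $\Gamma$. The $1$-page fact then yields that no two first crossing edges conflict, so $c_2$ is conflict-free.

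For pages $c_3$ and $c_4$, recall they carry the \emph{last} crossing edges and the \emph{forbidden} binding edges, at most one per block, partitioned by the parity of the block's level in $\TT$. I would collect all of these edges into one auxiliary graph $F$, add the cycle visiting all their endpoints in the order of $\Gamma$ (closed up between the last and the first vertex), and show that $F$ together with this cycle is planar: a block's last crossing edge and forbidden binding edge both lie inside the quadrangle of the $K_4$ attached at its leader and dominator, and by laminarity the attachments of distinct blocks do not force crossings with the spanning cycle. The spanning cycle is then a Hamiltonian cycle, so by the $2$-page fact the remaining edges split into two crossing-free pages; this bipartition is exactly the partition of the chords into those that must be drawn inside and those drawn outside the Hamiltonian cycle.

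The crux is to verify that this inside/outside bipartition coincides with the even/odd level coloring used in Case~D4, equivalently that any two \emph{conflicting} D4 edges belong to blocks of different parity. I would establish this by following a root-to-leaf path of $\TT$ and tracking, for each block, on which side of the spanning cycle its D4 edge is forced to lie, showing that the side flips each time the level increases by one. The delicate points I expect to cost the most work are the cases where several blocks share a leader or a dominator: blocks sharing a dominator form a directed path in $\TT$ and hence lie at distinct levels, while blocks sharing only a leader lie in disjoint gaps of $\Gamma$ and meet only at the shared endpoint, so that in each case two D4 edges that actually interleave differ by exactly one level and thus by parity. Distinguishing the forbidden-binding edges (which reach the leader) from the last crossing edges (which reach a neighbour of the leader) in this interleaving analysis is the main obstacle, and I would handle it by the same laminar-interval bookkeeping used for Lemma~\ref{lem:c1}.
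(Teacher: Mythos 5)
Your proposal follows essentially the same route as the paper: both build the Hamiltonian cycle $C$ from the vertex order $\Gamma$, show that the first crossing edges on $c_2$ together with $C$ form an outerplanar graph (hence one page), and show that the last crossing edges and forbidden binding edges together with $C$ form a planar Hamiltonian graph whose inside/outside bipartition is realized by routing the $D4$ edge of each block inside or outside $C$ according to the parity of the block's level in $\TT$ (hence two pages). The paper carries out your ``crux'' step by explicitly routing even-level blocks' edges outside $C$ and odd-level blocks' edges inside, which is the same parity-flip argument you sketch.
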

\begin{proof}
Consider a cycle $C$ defined by the vertex order in $\Gamma$; i.e., the vertices of $C$ are all the vertices
 of $H(C_i)$, and for each consecutive vertex in $\Gamma$, there is an edge in $C$, along with an edge
 between the first and the last vertex on $\Gamma$. We show that all the edges assigned to page $c_2$
 along with this cycle $C$ forms an outerplanar graph with $C$ as the outer cycle. We also show that
 all the edges assigned to pages $c_3$, $c_4$, along with $C$ forms a planar graph with the Hamiltonian
 cycle $C$. The claim thus follows.

First, consider a fixed planar embedding of $C$ induced from the embedding of $H(C_i)$. Delete all the
 edges from $H(C_i)$ except for the edges on $C_i$ and the edges on the boundary of each block $B_j$.
 For each block $B_j$, delete the edge between its leader and the last vertex (in the counterclockwise
 order). Finally also delete each edge $(v_k,v_{k+1})$ for each outer vertex $v_k$, which is a dominator
 of some block. Finally add the following edges for each dominator $v_k$. If $v_k$ dominates only
 a single block $B_j$, then add the edge between $v_k$ and the first vertex assigned to $B_j$, and the
 edge between $v_{k+1}$ and the last vertex assigned to $B_j$. These two edges can be routed
 without a crossing near the (now removed) edge between the leader and the last vertex of $B_j$; see
 Fig.~\ref{fig:hamil-c234}. If $v_k$ dominates more than one blocks $B_{j1}$, $B_{j2}$, $\ldots$, $B_{jt}$
 in this cw-order, then we add the edge from $v_k$ to the first vertex of $B_{j1}$, and an edge
 from $v_{k+1}$ to the last vertex of $B_{jt}$. Also for $1\le l<t$, add an edge from the last vertex of
 $B_{jl}$ to the first vertex of $B_{j(l+1)}$. Again all these edges can be routed near the (now removed)
 edges between the leader and the last vertex of the blocks. This gives a planar embedding of $C$.

\begin{figure}[tb]
\vspace{-0.3cm}
\centering
\includegraphics[width=0.8\textwidth]{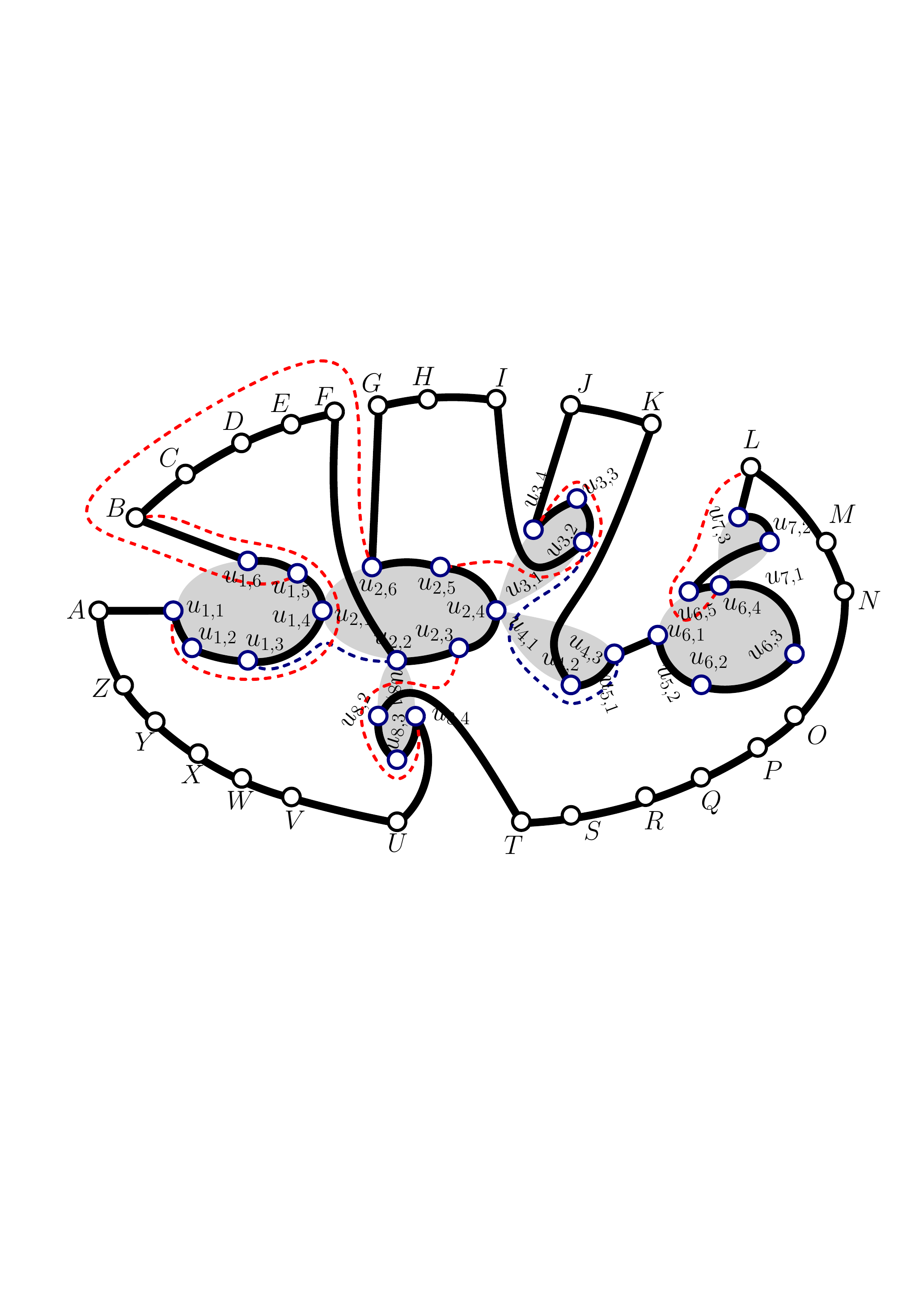}
\caption{Construction of the Hamiltonian cycle from $\Gamma$ (thick black edges).
 The blue dotted edges are the first crossing edges of the blocks. The red dotted edges are the
 last crossing edges and the forbidden binding edges for the blocks.}
\label{fig:hamil-c234}
\end{figure}

We now show that all the edge assigned to page $c_2$ can be added in the interior of $C$ without
 crossing. The edges assigned to $c_2$ are the first crossing edges of the blocks. For any block $B$,
 with leader $d$, its first crossing edge (if any) is between the first vertex $u_1$ assigned to $B$ and
 the vertex $x$ of $B'$ preceding $d$, where $B'$ is either the parent of $B$ in $\TT$ or the sibling
 of $B$ in $\TT$ just clockwise of it (Note that, in the later case, $x$ is the last vertex of $B'$).
 We route such an edge as follows. We follow the boundary of $B$ in cw-order from $u_1$ to
 $d$, then cross the boundary of $B'$ if it is a sibling of $B$. Finally we follow the boundary of $B'$
 (counterclockwise in $B'$ is the parent of $B$; clockwise otherwise) to $x$; see Fig.~\ref{fig:hamil-c234}.
 The routed edges are planar and are in the interior of $C$. Hence they induce an outerplanar embedding,
 implying that edges assigned to $c_2$ can be embedded on a single page.

Finally, the edges assigned to page $c_3$, $c_4$ are the last crossing edges and the forbidden edges
 of the blocks. We show how we route them in the embedding of $C$ without crossings.
 Consider a block $B$ with the leader $d$ and a last crossing edge $e$. Then $e$ is between the last
 vertex of $B$ and the vertex $x$ on the parent of $B$ in $\TT$ following $d$ in the counterclockwise
 order. If $B$ is at an even level in $\TT$, we route the edge outside of $C$, following the edge to its
 dominator. Then we follow the boundary of $C$ until we reach the last vertex of $B'$. Finally we follow
 the inside of the boundary of $B'$ to $x$. If $B$ is in the odd level, we route $e$ inside following the
 boundary of $B$ in the cw-order until $d$, then cross the boundary and finally follow the boundary
 of $B'$ in the ccw-order to $x$; see Fig.~\ref{fig:hamil-c234}. For each block, if its last
 crossing edge follows its outside
 boundary, then the edges from its children blocks following its inside boundary and vice versa.
 Furthermore for the children of a block $B$ in cw-order, their leaders also appear in the clockwise
 order on $B$ and the edges from each child only covers the boundary of $B$ only up to its leader.
 Thus these edge do not create crossing. Finally for a forbidden edge $e$ of a block $B$, between it
 leader and its dominator, we route $e$ in the same route for the last crossing edge; see
 Fig.~\ref{fig:hamil-c234}. Thus all these edges along with $C$ forms a planar graph with the Hamiltonian
 cycle $C$, and hence the can be embedded in the two pages $c_3$, $c_4$.
\end{proof}

\subsection{Drawing 3-Connected 1-Planar Graphs}

\begin{comment}

I suggest to "split" Theorem 1   in two parts an "Independence
Lemma" and the Theorem

The "Independence lemma" states

all vertices on level $i+1$ that are enclosed by circle $C_i$ on
level $i$ are placed between two consecutive vertices $v_j$ and
$v_{j+1}$ from level $i-1$
if one uses the "trick" to place the
vertices of these blocks right before $v_2$  - as by Yannakakis

Theorem 1 is kept as is

but the proof comprises only the inductive steps  (upper part) and
then uses the Independence Lemma

\end{comment}

Here we describe a 12-page book embedding algorithm for any 3-connected 1-planar graph $G$. We
 first show how we order
 the vertices of $G$ using the vertex placement order for 2-level subgraphs from the previous section. We
 then show how we assign the edges of $G$ into a small number of pages.

As we described in the previous Section, we may assume that $G$ is a normal planar-maximal 1-planar
 graph. We use a ``peeling'' technique to find a linear order for the vertices of the graph $G$ level-by-level
 using the algorithm for Lemma~\ref{lem:2-level}. We first find and order of the vertices on the outer cycle
 $C_0$ (level 0 vertices) such that the vertices are placed in the cw-order around $C_0$. We then
 traverse the graph outside in and iteratively use the algorithm for Lemma~\ref{lem:2-level} to place the
 internal vertices. For the 2-level graphs between levels $i$ and $i+1$, we consider that the vertices of
 level $i$ have already been placed and we place the vertices of level $i+1$ using the algorithm for
 Lemma~\ref{lem:2-level}. 
 
 Consider a 2-level graph $H(C_i)$ between levels $i$ and $i+1$, where
 $C_i=\langle v_1, \ldots v_t\rangle$ is the outer boundary of $H(C_i)$. If the cycle $C_i$ is the first block
 in a 2-level graph between levels $i-1$ and $i$, then the interval between the vertices of $C_i$ does not
 contain any other vertex and we can use the algorithm in the previous section to place the level $i+1$
 vertices inside $H(C_i)$ between the already placed vertices of $C_i$. Otherwise there is
 some vertex of level $j<i$ between $v_1$ and $v_2$, but the remaining vertices ($v_2,\ldots v_t$) are
 in a consecutive interval. 
 %without any other vertices. 
 In this case we again place the level $i+1$ vertices
 inside $H(C_i)$ as in the algorithm for Lemma~\ref{lem:2-level}, but we place the vertices of level $i+1$ blocks
 dominated by $v_1$, just before $v_2$ (after all possible vertices
 of level $j<i$). In either case, the vertices on each level $i+2$ cycles are placed in an interval with no
 vertices of level $j\le i$ in between. Call this Algorithm \textbf{Order-Vertices}. We thus have the following
 lemma, whose proof follows from the above discussion; also see~\cite{Yanna89}:

\begin{lemma} Let $\Gamma$ be the vertex order for a normal planar-maximal 1-planar graph $G$,
obtained by Algorithm \textbf{Order-Vertices}. Let $C_i$ be some level $i$ cycle in $G$. Then all vertices
 at level $i+1$ inside $C_i$ are placed strictly between two consecutive level $i+1$ vertices $v_j$ and
 $v_{j'}$ in $\Gamma$.
\label{lem:independent}
\end{lemma}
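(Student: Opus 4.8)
The plan is to prove the statement by induction on the level $i$, formalizing the case analysis stated just before the lemma. The right invariant to carry is a contiguity (independence) property: the interior of every level-$i$ cycle, i.e.\ the set of all vertices that Algorithm \textbf{Order-Vertices} places inside it, forms one uninterrupted block of $\Gamma$ into which no vertex of smaller level ever falls. Since later rounds only insert vertices of level $>i$ into the gaps of this block, it suffices to prove the statement at the moment level $i+1$ is inserted; the property then survives all subsequent insertions.

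For the base case I would take the outermost cycle $C_0$, whose interior is the whole rest of $G$, so the level-$1$ vertices are trivially contiguous. For the inductive step, fix a level-$i$ cycle $C_i=\langle v_1,\dots,v_t\rangle$ and assume the statement for all cycles of level $<i$. By the inductive hypothesis applied to the level-$(i-1)$ cycle containing $C_i$, the level-$i$ vertices inside it sit in one block free of lower-level vertices, and the consecutive (or nested) placement method of Lemma~\ref{lem:2-level} puts the non-leader vertices of each block consecutively; hence $v_2,\dots,v_t$ already occupy a consecutive interval of $\Gamma$. Now the dichotomy before the lemma applies: either $C_i$ is the first block of its parent $2$-level graph, in which case the leader $v_1$ is adjacent to $v_2$ and nothing intervenes, or some vertices of level $j<i$ lie between $v_1$ and $v_2$. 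In both cases the level-$(i+1)$ vertices dominated by $v_2,\dots,v_t$ are inserted into the gaps $(v_k,v_{k+1})$ with $k\ge2$ and so remain inside the interval spanned by $v_2,\dots,v_t$; the only vertices that could break contiguity are those dominated by the leader $v_1$, and these are exactly the ones the algorithm relocates to just before $v_2$ (after the foreign level-$j$ vertices) rather than directly after $v_1$.

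The step I expect to be the main obstacle is showing that this relocation is harmless, which has two parts. First, I must argue that no vertex of level $<i$ can land strictly inside the resulting interval: by the inductive invariant the only lower-level vertices neighbouring this region are those of the parent frame, and all of them precede $v_2$, so the interval from just before $v_2$ to $v_t$ contains every level-$(i+1)$ vertex inside $C_i$ and no vertex of level $<i$, which is the assertion of the lemma. Second, I must check that the relocation is consistent with Lemma~\ref{lem:2-level}: deleting the foreign level-$j$ vertices from $\Gamma$ leaves the induced order on the vertices of $H(C_i)$ identical to the order assumed there (the $v_1$-dominated blocks appearing immediately after the leader), so the vertex placement and the conflict-free page assignment of Lemmas~\ref{lem:c1} and~\ref{lem:c234} carry over verbatim. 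Granting these two points, the level-$(i+1)$ interior of $C_i$ is contiguous, and since every level-$(i+2)$ cycle lies inside one such interior its vertices likewise occupy an interval with no vertex of level $\le i$ in between, closing the induction.
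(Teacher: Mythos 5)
Your overall strategy --- an induction over the levels that formalizes the two-case discussion of Algorithm \textbf{Order-Vertices} --- is exactly what the paper intends; the paper itself offers nothing beyond ``the proof follows from the above discussion,'' so your write-up is, if anything, more explicit about the two points that genuinely need checking: that $v_2,\dots,v_t$ occupy one gap of the parent structure at the moment level $i+1$ is inserted, and that relocating the $v_1$-dominated blocks to just before $v_2$ both preserves the claim and keeps the restriction of $\Gamma$ to $H(C_i)$ identical to the order assumed in Lemma~\ref{lem:2-level}.

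However, the invariant you announce, and the base case you give for it, are literally false and must be weakened before the induction goes through. The set of vertices placed inside a level-$i$ cycle $C_i=\langle v_1,\dots,v_t\rangle$ does \emph{not} form one uninterrupted block of $\Gamma$: each block of the level-$(i{+}1)$ cactus is inserted into the gap immediately after its own dominator, so the level-$(i{+}1)$ vertices inside $C_i$ are interleaved with the level-$i$ vertices $v_2,\dots,v_t$ of $C_i$ itself. Already for $C_0$ the level-$1$ vertices are scattered among the gaps $(v_k,v_{k+1})$ of the outer cycle, so ``the level-$1$ vertices are trivially contiguous'' is wrong. The invariant that is both true and sufficient (and is all that the lemma is used for downstream, namely that no level-$(i{+}1)$ edge conflicts with a level-$j$ edge for $j<i$) is the weaker one: the interval of $\Gamma$ spanned by $v_2,\dots,v_t$ together with everything placed inside $C_i$ contains no vertex of level strictly less than $i$. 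Your inductive step in fact only uses this weaker property (consecutiveness of $v_2,\dots,v_t$ at insertion time, insertions confined to the gaps of $C_i$, relocation of the $v_1$-dominated blocks past the foreign lower-level vertices), so the repair is local; but as written you invoke an induction hypothesis stronger than the one you establish. One further detail to make explicit: a block dominated by $v_t$ has its vertices placed immediately after $v_t$, so the spanned interval can extend beyond $v_t$; you should note that no vertex of level $<i$ sits immediately there either, which again follows from the weakened hypothesis applied to the parent cycle.
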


Lemma~\ref{lem:independent} implies that with this vertex order, no level $i+1$ edge of $G$ conflicts with
 any level $j$ edge with $j<i$. We thus can iteratively use the drawing algorithm in Lemma~\ref{lem:2-level}
 to obtain a book embedding of $G$ as follows:

\begin{theorem}
\label{th:14-page}
% Let $G$ be a 3-connected 1-planar graph. Then $G$ has a book embedding on 14 pages.
Every 3-connected 1-planar graph $G$ has a book embedding on 14 pages.
\end{theorem}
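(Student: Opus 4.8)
The plan is to reuse the seven pages of Lemma~\ref{lem:2-level} across all levels, paying only a factor of two for a parity argument, so that $2\cdot 7 = 14$ pages suffice. First I would fix the vertex order $\Gamma$ produced by Algorithm \textbf{Order-Vertices} and recall that the peeling partitions the edges of $G$ into level edges (joining two vertices of the same level) and binding edges (joining consecutive levels), and that each 2-level subgraph $H(C_i)$ consists exactly of the level-$i$ edges on $C_i$, the level-$(i+1)$ edges, the binding edges between levels $i$ and $i+1$, and the crossing edges in that region. I would process the subgraphs $H(C_0), H(C_1), \dots$ outside-in; when $H(C_i)$ is reached the vertices of $C_i$ are already placed in $\Gamma$ in the cw- or ccw-order demanded by Lemma~\ref{lem:2-level} (the orientation alternating with the parity of $i$, since the inner boundary of a cw-cycle is traversed counterclockwise), so Lemma~\ref{lem:2-level} applies and assigns the remaining edges of $H(C_i)$ to seven pages.

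The key step is the page reuse itself. I would keep two disjoint sets of seven pages, one used for every even-level subgraph and one for every odd-level subgraph. To justify this I must show that for two distinct subgraphs $H(C_i)$ and $H(C_{i'})$ of the same parity, no edge of one conflicts with an edge of the other. If $i=i'$, then $C_i$ and $C_{i'}$ are distinct components at the same level, whose interiors occupy disjoint intervals of $\Gamma$, so their edges have disjoint spans and cannot interleave. If $i<i'$, then $i'\ge i+2$, and here I would invoke Lemma~\ref{lem:independent}: every vertex touched by an edge of $H(C_{i'})$ has level at least $i+2$ and hence, applying the lemma to the level-$(i+1)$ cycle $C_{i+1}$ enclosing $C_{i'}$, lies in a single contiguous interval $I$ of $\Gamma$ whose interior contains no vertex of level at most $i+1$. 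Every endpoint of an edge of $H(C_i)$ has level $i$ or $i+1$, hence lies outside the open interval $I$; such an edge therefore either spans $I$ entirely or is disjoint from it, and in neither case can it interleave with an edge of $H(C_{i'})$, whose span lies inside $I$. Thus same-parity subgraphs are pairwise conflict-free and may share the same seven pages.

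Finally I would assemble the global assignment, taking care that each edge is assigned only once: the level-$(i+1)$ boundary edges belong to both $H(C_i)$ and $H(C_{i+1})$, so I assign them during the processing of $H(C_i)$ and treat them as already placed when processing $H(C_{i+1})$. Since within each parity class every physical page carries a conflict-free set of edges, the construction yields a valid book embedding on $14$ pages, computable in linear time from the given 1-planar embedding.

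I expect the main obstacle to be the same-parity conflict argument of the second paragraph, as it is the only place where the global interaction of the independently constructed 2-level embeddings must be controlled; everything else is bookkeeping layered on top of Lemmas~\ref{lem:2-level} and~\ref{lem:independent}.
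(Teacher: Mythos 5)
Your proposal is correct and follows essentially the same route as the paper: alternate two disjoint sets of seven pages between even and odd levels, use Lemma~\ref{lem:2-level} for conflicts within each 2-level subgraph and Lemma~\ref{lem:independent} for conflicts across levels of the same parity. Your write-up is in fact somewhat more careful than the paper's, spelling out the same-level disjoint-interval case and the single assignment of the shared level-$(i+1)$ boundary edges, both of which the paper leaves implicit.
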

\begin{proof}
Let $G$ be a normal planar-maximal 1-planar graph. Using Algorithm
 \textbf{Order-Vertices} we find a linear order of the vertices in $G$. We now again use the ``peeling''
 technique to embed the edges of $G$ level-by-level following the algorithm for Lemma~\ref{lem:2-level}.
 Let $p_1, \ldots p_6$, $c_1, \ldots, c_8$ denote the 14 pages.
 We first embed the outer cycle $C_0$ (level 0 vertices) in a single page (page $p_1$). Then
 for each 2-level graph between levels $i$ and $i+1$, we iteratively use the pages $p_1$, $p_2$, $p_3$,
 and $c_1$, $c_2$, $c_3$, $c_4$ to embed all the edges, when $i$ is even; and we use the pages
 $p_4$, $p_5$, $p_6$, and $c_5$, $c_6$, $c_7$, $c_8$ when $i$ is odd. By Lemma~\ref{lem:2-level},
 each 2-level subgraph is drawn without conflict, and by Lemma~\ref{lem:independent}, the
 edge in any 2-level does not create conflict with any 2-level subgraph in a deeper level.
\end{proof}

%\vspace{-0.1cm}
We can actually reduce the number of pages a little.

\begin{theorem}
\label{th:12-page}
% Let $G$ be a 3-connected 1-planar graph. Then $G$ has a book embedding on 12 pages.
Every 3-connected 1-planar graph $G$ has a book embedding on 12 pages.
\end{theorem}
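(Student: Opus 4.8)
The plan is to reuse the entire construction behind Theorem~\ref{th:14-page}, changing only the way the planar edges are paged: I would handle them globally instead of annulus-by-annulus. The starting point is the fact already recorded above, that deleting the crossing set $X$ (one edge per crossing pair, chosen by rules S1--S5) from a normal planar-maximal $1$-planar graph $G$ leaves a \emph{maximal planar} graph $G' = G - X$. In Theorem~\ref{th:14-page} the planar edges are placed three pages at a time, and those three pages are duplicated across even and odd levels, costing six pages. Since $G'$ is planar, Yannakakis' theorem~\cite{Yanna89} embeds all of $G'$ on only four pages in a single layout. So the idea is to spend four pages on $G'$ globally, leave the eight crossing-edge pages untouched, and conclude $4+8 = 12$.

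First I would fix the spine. I keep exactly the order $\Gamma$ produced by Algorithm \textbf{Order-Vertices}; this is permissible because \textbf{Order-Vertices} is, by design, the peeling order of Yannakakis~\cite{Yanna89}, so $\Gamma$ is a valid input order for his four-page construction. I then invoke~\cite{Yanna89} as a black box to obtain a book embedding of the maximal planar graph $G'$ on four pages $p_1,p_2,p_3,p_4$ consistent with $\Gamma$, which places all planar edges of $G$ (the edges of every $C_i$, all block boundaries, and all binding edges) with no conflict.

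Next I would reinstall the crossing edges. The set $X$ is distributed over the eight pages $c_1,\dots,c_8$ exactly as in the proof of Theorem~\ref{th:14-page}: the edges of $X$ in an even-level annulus $H(C_i)$ go to $c_1,c_2,c_3,c_4$ and those in an odd-level annulus go to $c_5,c_6,c_7,c_8$, following Cases~D1--D4. The correctness of this assignment depends only on the order $\Gamma$ and not on how the planar edges were paged: inside a single annulus, Lemmas~\ref{lem:c1} and~\ref{lem:c234} guarantee that no two crossing edges sharing a page conflict, and Lemma~\ref{lem:independent} guarantees that crossing edges from annuli two or more levels apart cannot conflict, which is precisely why reusing the same four crossing pages for all even (resp.\ odd) levels is safe. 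Hence the eight crossing pages carry all of $X$ without conflict, and together with the four planar pages this yields a $12$-page embedding.

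The step I expect to be the main obstacle is the compatibility check in the second paragraph: I must verify that the specific order $\Gamma$ I froze for the crossing-edge analysis — with the block-placement trick before $v_2$ and the consecutive method for co-dominated blocks — is one of the orders for which Yannakakis' four-page bound genuinely holds, so that the global four-page planar embedding and the eight-page crossing assignment can be realized on one and the same spine. Once that is in place, everything else is a direct transcription of Theorem~\ref{th:14-page} with its six planar pages collapsed to four.
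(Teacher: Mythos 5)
Your proposal is correct and is essentially the paper's own argument: the paper likewise observes that the planar edges across all levels form a planar subgraph embedded with the Yannakakis vertex order, re-pages that subgraph globally on four pages via~\cite{Yanna89}, and keeps the eight crossing-edge pages, for a total of $4+8=12$. The compatibility point you flag (that $\Gamma$ is a valid Yannakakis order) is exactly the justification the paper relies on as well, stated there without further elaboration.
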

\begin{proof} We can obtain a book embedding of $G$ on 12 pages as a corollary of the construction
 in Theorem~\ref{th:14-page} after a post processing step. We note that all the 2-level planar graphs $H'$
 at all levels $i$ of $G$, together induce a planar subgraph $H$ of $G$, and are embedded on the six
 pages $p_1, \ldots p_6$. Furthermore the order of the vertices in this book embedding is the same as the
 one obtained by the algorithm by Yannakakis~\cite{Yanna89} for a book embedding of $H$. Thus we use
 the algorithm by Yannakakis~\cite{Yanna89} to embed $H$ on only four pages, resulting in a total of 12
 pages.
\end{proof}

\section{Book Embedding of General 1-Planar Graphs}

For the general case we may assume that the input graph is a planar-maximal graph and hence
 is 2-connected. We first extend the procedure of the \textit{normalization} to the case of a planar-maximal
 1-planar graph $G$. A pair of vertices $\{u, v\}$ of $G$ share more than two crossing edge
 pairs if and only if $\{u, v\}$ form a separation pair in $G$~\cite{ABK13}. During the normalization,
 for any separation pair $\{u, v\}$, we route the edge $(u,v)$ such that all the crossing edge pairs
 with $u$, $v$ as end-vertices falls on the same side of $(u,v)$; see Fig.~\ref{fig:separation}.

%Assume that $G$ is a normalized planar maximal 1-planar graph.
Suppose there is a separation pair $\{u, v\}$, with a decomposition $G - \{u, v\} = \{H_0, \ldots, H_k\}$
 for some  $k \geq 1$. For any such component $H_j$, let $H^*_j$ be the subgraph of $G$ induced
 by the vertices of $H_j$ and $\{u,v\}$. Then for at most one component $H_j$, $u$ and $v$ are not
 on the outerface of $H^*_j$. Assume thus without loss of generality that $H^*_1$, $\ldots$, $H^*_k$
 all have $u$, $v$ on the outerface. We call $H_0$ the \emph{main} component and $H_1$, $\ldots$,
 $H_k$ the \textit{inner components} for $\{u,v\}$. Also call $H^*_1$, $\ldots$, $H^*_k$ the
 \textit{extended inner components}.
 %which remains if $H_1, \ldots, h_k$were omitted and  $H_0, \ldots, H_k$ \emph{inner} components.
 The edge $(u,v)$ is called \emph{separating edge}.
Note that the inner components can be permuted
 and flipped at $\{u,v\}$.
 In a normalized planar maximal embedding $\mathcal{E}(G)$ of $G$, the inner components $H_1$,
 $\ldots$, $H_k$ are attached to $(u,v)$ and are embedded on one side of $(u,v)$, say in this
% counterclockwise order at $u$.
 ccw-order at $u$.
The components are separated by one or two pairs of crossing edges; see Fig.~\ref{fig:separation},
 and they may also be separated by copies of the separation
 edge~\cite{Bran-vis-14,Bran-map-15}. The embeddings of the extended inner components
 are $B$- or $W$-configurations, defined by \cite{thomassen88}, and hence the boundaries of
 the inner components are triangles and quadrangles.
\begin{wrapfigure}{r}{.39\textwidth}
%\begin{figure}[t]
%\vspace{-0.1cm}
\centering
\includegraphics[width=0.36\textwidth]{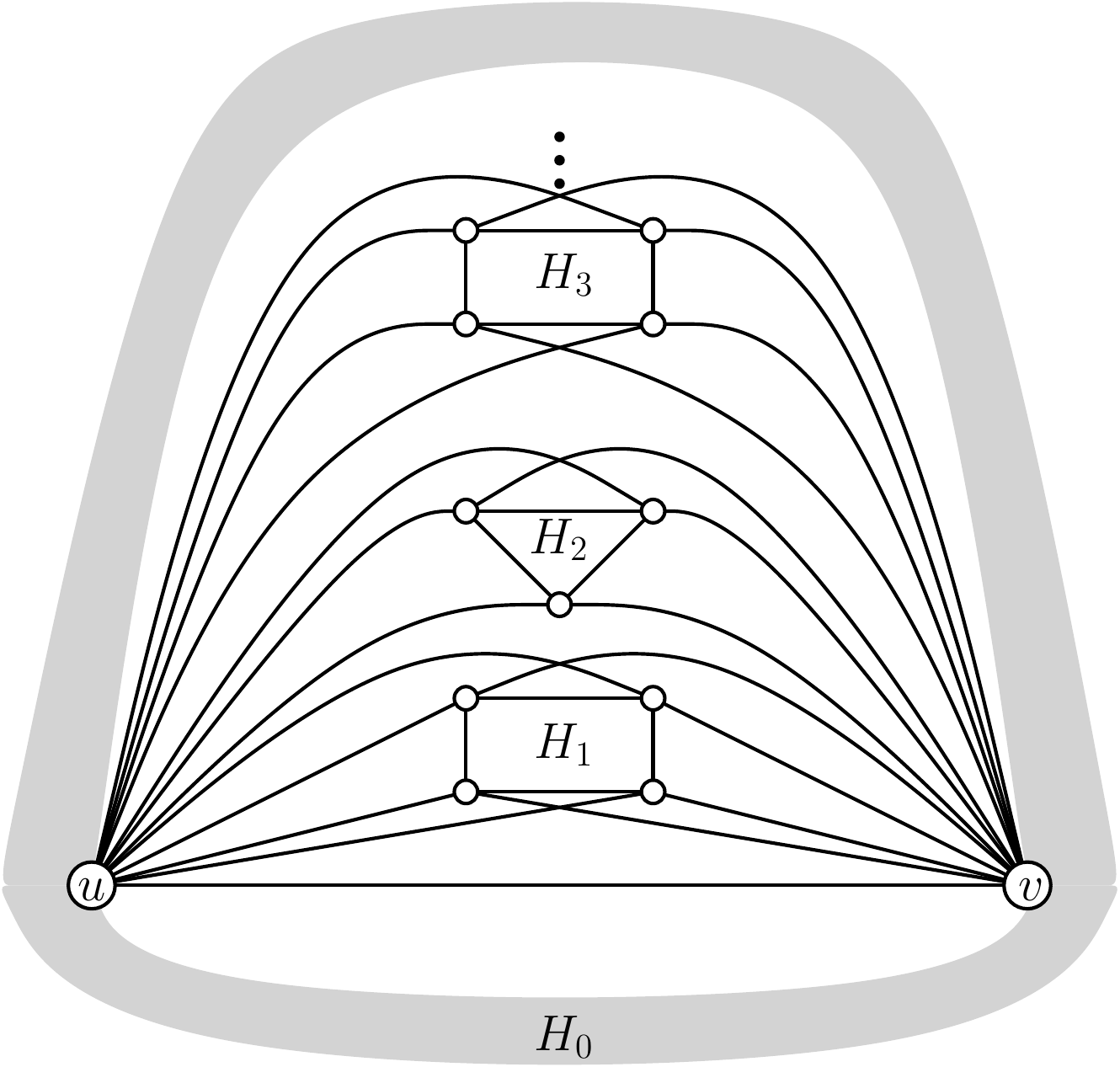}
\caption{A separation pair and the corresponding components.
\vspace{-0.8cm}
}
\label{fig:separation}
%\end{figure}
\end{wrapfigure}

\begin{comment}
If $\mathcal{E}(G)$ is a planar-maximal augmented embedding, then
the components are separated by one or two pairs of crossing edges
and they may also be separated by copies of the separation edge
\cite{Bran-vis-14}, Brandenburg-4map-arXiv2015}. The
embeddings of the inner components are $B$- or $W$-configurations as
given by \cite{Thomassen-88} with the exception of one inner
component whose boundary is a triangle if there is a pair of
crossing edges incident to $u$ and $v$ in the main component, i.e.,
an $X$-configuration. Note that there are subgraphs inside the
boundaries of the planar edges of these configurations, but these do
not matter at this stage. Such a planar-maximal augmentation $\mathcal{E}(G)$ can
be computed in linear time from a given 1-planar embedding.
\end{comment}

We now extend our 14-page book embedding of 3-connected 1-planar graphs and the ``peeling technique''
 from Section~\ref{sec:3-connected}.

\begin{theorem} \label{2connected}
Every 1-planar graph $G$ has a book embedding on 16 pages.
% and given a 1-planar embedding of $G$, such a book embedding can be constructed in linear time.
\end{theorem}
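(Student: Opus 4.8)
The plan is to reduce the general 2-connected case to the already-solved 3-connected case by handling separation pairs through a careful recursive decomposition, reusing the 14-page machinery from Theorem~\ref{th:14-page} while paying only a small number of extra pages for the edges that glue components together.

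First I would observe that, after normalization and planar-maximal augmentation, the graph decomposes along separation pairs $\{u,v\}$ into a main component $H_0$ and inner components $H_1,\ldots,H_k$, each of which is itself (after adding $u,v$) a smaller 1-planar graph whose boundary is a $B$- or $W$-configuration, hence triangular or quadrangular. The key structural fact, already supplied above, is that all crossing-edge pairs associated with a separation pair lie on one side of the routed separating edge $(u,v)$, and that the extended inner components $H^*_1,\ldots,H^*_k$ all have $u,v$ on their outer face. I would process the decomposition tree of separation pairs from the outside in, exactly mirroring the peeling technique of Section~\ref{sec:3-connected}. For each separation pair, I place the inner components consecutively in the linear order immediately after (or nested within) the interval already occupied by $\{u,v\}$, so that each inner component occupies a contiguous block of the spine with $u$ and $v$ at its ends. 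This is the crucial invariant: it guarantees that an edge internal to one component never conflicts with an edge internal to another, so the 14 pages $p_1,\ldots,p_6,c_1,\ldots,c_8$ can be \emph{reused} across all components, just as they were reused across levels in Theorem~\ref{th:14-page}.

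The remaining work is to account for the edges that cross component boundaries, namely the separating edges $(u,v)$ (and their copies) together with the one or two pairs of crossing edges that separate consecutive inner components. These edges all emanate from the pair $\{u,v\}$ and fan out over the consecutive intervals of the inner components; because the components are laid out in a fixed ccw-order around $u$, these fan edges nest rather than cross, so they can be absorbed into a constant number of additional pages. I would argue that the separating edges and the boundary crossing edges between components can be colored with at most two new pages beyond the 14, yielding the claimed bound of $16$. Concretely, the planar separating edges $(u,v)$ extend the planar subgraph $H'$ and can be folded into the Yannakakis four-page embedding of the planar part, while the at-most-two crossing pairs separating each adjacent pair of inner components behave like the level-$i$ crossing edges of Lemma~\ref{lem:2-level} and fit into the crossing pages $c_1,\ldots,c_8$; the only genuinely new edges requiring fresh pages are those that would conflict across the separation structure.

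\textbf{The main obstacle} I expect is verifying that the separating-edge structure does not overflow the existing page budget: when many inner components hang off the same pair $\{u,v\}$, the separating edges and the inter-component crossing pairs all share the endpoints $u$ and $v$, and I must confirm that their nesting (rather than crossing) behavior really follows from the fixed ccw-ordering and the one-sided routing of $(u,v)$. In particular the delicate point is that a crossing edge separating two inner components could in principle conflict with a crossing edge internal to a component at a different level; ruling this out requires an analogue of Lemma~\ref{lem:independent} showing that each extended inner component occupies a spine-interval free of vertices from other components, which in turn rests on the $B$/$W$-configuration boundaries being triangular or quadrangular. Once that independence is established, the bookkeeping is routine: the $14$ pages carry everything internal to components exactly as in Theorem~\ref{th:14-page}, and a disciplined two extra pages suffice for the separating edges and inter-component crossings, giving $16$ pages overall. \qed
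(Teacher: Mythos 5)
Your proposal follows essentially the same route as the paper: peel off the inner components at each separation pair to reduce to the $3$-connected main graph, insert each inner component's boundary vertices as a consecutive block on the spine adjacent to the separation pair, reuse the $14$ pages of Theorem~\ref{th:14-page} for everything internal via the contiguity/independence invariant, and spend two fresh pages on the edges gluing the inner components to $\{u,v\}$. The only divergence is in the fine-grained bookkeeping (the paper places the boundary vertices immediately left of $v$, sends the $u$-fan to the page of $(u,v)$, the boundary cycles to $c_1$/$c_5$, and only the $v$-fan to the new pages $q_1$/$q_2$), which is a detail rather than a different argument.
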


\begin{proof}
We proceed as in the case of 3-connected graphs. However we extend the peeling technique here to
 deal with the
 inner components for the separation pairs. Let the \textit{main graph} $G_0$ be obtained from $G$
 by deleting all the inner components for all the separation pairs. Clearly $G_0$ is 3-connected.
 For each separation pair $\{u, v\}$, the edge $(u,v)$ is a planar edge and if $(u,v)$ is an edge of the
 main graph, then by the peeling technique, $u$, $v$ are on the same level or on consecutive levels.
 Let $H_1$, $\ldots$, $H_k$ be the inner components for $u, v$.
 We then assign the vertices on the outer boundary $O_j$ for each inner component $H_j$ on the higher
 (i.e., deeper)
 of the two levels for $u$ and $v$. For the remaining vertices of $H_j$ we proceed with the
 peeling technique recursively and assign them to subsequent levels. Let $u$, $v$ belong to some
 2-level subgraph $H(C_i)$ of the main graph. Then the vertices on the outerboundary for each inner
 component for $u,v$ and the edges between these outer vertices vertices and $u$, $v$ are on the
 2-level subgraph for $G$. We now show how we place these vertices and assign the edges to augment
 the book embedding $\Gamma$ of $G_0$. In addition to the 14 pages used in $\Gamma$, we use two
 more pages $q_1$ and $q_2$ for 2-level subgraphs at odd and even levels, respectively.

\begin{comment}
\begin{figure}[tb]
\begin{minipage}[b]{.4\textwidth}
\centering
%\includegraphics[width=0.36\textwidth]{figures/separation-pair.pdf}
\includegraphics[width=\textwidth]{separation-pair.pdf}
\caption{A separation pair and the corresponding components}
\label{fig:separation}
\end{minipage}
\hfill
\begin{minipage}[b]{.3\textwidth}
\centering
%\vspace{-1cm}
\includegraphics[width=\textwidth]{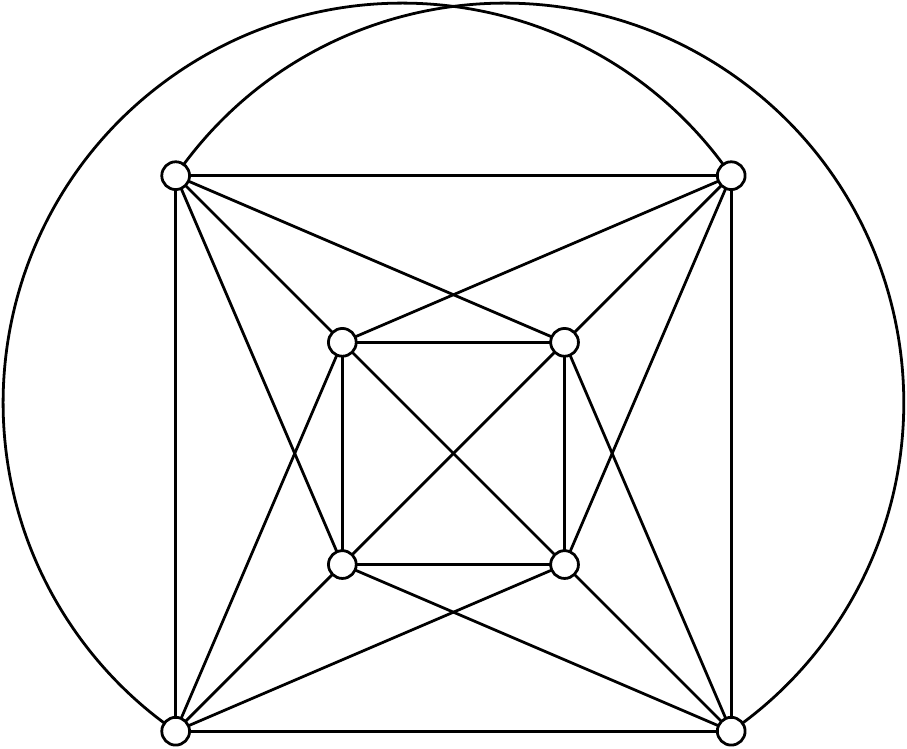}
%\includegraphics[width=0.3\textwidth]{figures/cube-1-planar.pdf}
\caption{The extended wheel graph $XW_8$ requires 4 pages.
%\vspace{-.9cm}
}
\label{fig:counter}
\end{minipage}
\end{figure}
\end{comment}

For each separating edge $(u, v)$ on the main graph, with $u$ placed before $v$ in $\Gamma$,
 insert the vertices on the outerboundary of each inner component for $u,v$ consecutively,
 to the immediate left of $v$ (in cw-order if $v$ is on odd level and in ccw-order
 otherwise). If there is more than one inner component for $u,v$, the order of their placement is arbitrary.
If several separating edges are incident to $v$, with the other end-vertex, say $w_1$, $\ldots$,
 $w_q$, all placed before $v$ and in this order in $\Gamma$, insert the vertices of the corresponding
 inner components in reverse order (i.e., the inner components for $w_q$, $\ldots$, the inner components
 for $w_1$).

The edges on the outerboundary are assigned to $c_1$ or $c_5$ for odd and even levels,
 respectively; they do not create conflicts because they form simple cycles of length 3 or 4 and the
 vertices are consecutive. For each inner component $H_j$ for separation pair $\{u,v\}$, the edges
 from $u$ to the vertices on $O_j$ are assigned to the same page as $(u, v)$, and the edges from
 $v$ to the vertices of $O_j$ are assigned to page $q_1$ (resp. $q_2$) for odd (resp. even) levels.
 Here the edges to $v$ do not create conflicts with each other since they are all incident to $v$, and
 they do not create conflicts with other edges  on $q_1$ (or $q_2$) since they are all placed immediately
 before $v$. Similarly the edges to $u$ do not cross each other since they are all incident to $u$ and
 they do not create conflicts with other edges in the same page since they follow the planar edge $(u,v)$
 assigned to the same page.

We recursively place the vertices inside each inner component during the computation for 2-level
 subgraphs on subsequent levels. Since we assign edges from 2-level subgraphs at odd and even
 levels on disjoint pages, following the argument of Lemma~\ref{lem:independent} the edges assigned
 to each of the 16 pages
% $p_1$, $\ldots$, $p_6$, $q_1$, $q_2$, and $c_1$, $\ldots$, $c_8$
 do not create conflicts.
%An argument similar to that used in the proof of Theorem~\ref{th:time} can be used to show that the computation takes linear time.
\end{proof}

%\vspace{-0.1cm}
It is $NP$-hard to determine whether a planar graph (which is a
subclass of 1-planar graph) is sub-Hamiltonian. Hence, the minimum
number of pages of a 1-planar graph cannot be computed efficiently.
However, our algorithm takes only linear time, given a
1-planar embedding.

\begin{theorem}
\label{th:time}
%There is a linear time algorithm to construct a 12 page book embedding of a 3-connected 1-planar graph, given a 1-planar embedding.
There is a linear time algorithm to construct book embedding of a  general and a 3-connected 1-planar
 graph on 16 and 12 pages, respectively, given a 1-planar embedding.
\end{theorem}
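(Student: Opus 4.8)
The final statement asserts that the entire construction---both the 16-page embedding for general 1-planar graphs (Theorem~\ref{2connected}) and the 12-page embedding for 3-connected 1-planar graphs (Theorem~\ref{th:12-page})---can be carried out in linear time, given a 1-planar embedding. The plan is to verify that each phase of the construction already described runs in linear time, and that the number of phases is bounded, so that their composition is linear. I would organize the argument around the three main stages: (i)~preprocessing the embedding into normal planar-maximal form, (ii)~computing the vertex ordering via Algorithm \textbf{Order-Vertices}, and (iii)~assigning edges to pages, including the post-processing step that reduces six planar pages to four.

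First I would invoke the already-cited result that, given a 1-planar embedding $\emb(G)$, its normal planar-maximal augmentation can be computed in linear time~\cite{ABK13}; this handles stage~(i), including the identification of separation pairs and the routing of separating edges needed for the general case. For stage~(ii), the key observation is that the peeling technique partitions the vertices into levels by iteratively removing the outer face of the planar skeleton. Each vertex and each edge is examined a constant number of times as its level is determined, so the level decomposition is linear; and the placement order within each $H(C_i)$ is Yannakakis's vertex-placement order, which is linear in the size of the 2-level subgraph~\cite{Yanna89}. Since the 2-level subgraphs partition the edges of $G$ (each edge is a level or binding edge between at most two consecutive levels), summing over all $i$ gives a total of $O(|V|+|E|)=O(n)$ time for the ordering, where the last equality uses the $4n-8$ edge bound~\cite{PT97}.

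For stage~(iii), I would argue that the page assignment for each $H(C_i)$---the three planar pages via Yannakakis's algorithm, the four crossing-edge pages via the case analysis D1--D4, and for the general case the two extra pages $q_1,q_2$ for inner components---requires only constant work per edge, since each case test (is an edge forbidden, is it a first/last crossing edge of a block, at which parity level does a block sit) depends only on local information in the block-cut tree $\TT$, which itself is computable in linear time. The block-cut trees across all levels again partition the graph, so the total assignment cost is linear. The 12-page reduction of Theorem~\ref{th:12-page} replaces the six-page embedding of the planar subgraph $H$ by Yannakakis's four-page embedding of the \emph{same} graph with the \emph{same} vertex order; since that algorithm is itself linear, the post-processing adds only $O(n)$.

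The main obstacle, and the point I would be most careful about, is ensuring that reusing the same constant set of pages across all levels does not force any global recomputation: one must confirm that the correctness of the page reuse (relying on Lemma~\ref{lem:independent}, which guarantees that edges at level $i+1$ never conflict with edges at levels $j<i$) holds \emph{locally} so that each 2-level subgraph can be processed independently in one pass, with no backtracking. Granting this, the construction is a single outside-in sweep over the level decomposition with constant work per vertex and edge, so the total running time is linear. I would close by remarking that this linearity is inherited directly from the linear-time guarantees of the normalization~\cite{ABK13} and of Yannakakis's planar book-embedding algorithm~\cite{Yanna89}, which the construction uses as black boxes.
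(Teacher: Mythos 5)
Your proposal is correct and follows essentially the same route as the paper's own (much terser) proof: linear-time normal planar-maximal augmentation from~\cite{ABK13}, constant time per edge to select and page-assign the removed crossing edges, Yannakakis's linear-time planar algorithm as a black box, and the $4n-8$ edge bound to conclude. Your version simply spells out the details (level decomposition, block-cut trees, the 12-page post-processing) that the paper leaves implicit.
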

\begin{proof}
Given the 1-planar embedding, the normal planar maximal augmentation can be obtained in linear time.
%OPTIONAL:  either proceeding level by level or using the SPQR-decomposition for the separation pairs.
The crossing edges to be removed are selected in constant time per edge. Yannakakis algorithm for
 planar graphs runs in linear time, and the assignment of a removed edge to a page takes constant time
 per edge. Since there are at most $4n-8$ edges, the algorithm runs in linear time.
\end{proof}

If the input graph is planar and Hamiltonian, the order of the vertices from a Hamiltonian cycle induces a 2-page book embedding~\cite{bk-btg-79}. We can use this as follows.
% We can thus obtain a 4-page book embedding for 1-planar graphs where the planar skeletion is sub-Hamiltonian.

\begin{corollary}
A 1-planar graph $G$ has a 4-page book embedding if the planar skeleton is Hamiltonian.
\end{corollary}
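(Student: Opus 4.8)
The plan is to reduce everything to the Bernhart--Kainen fact, already invoked above, that a planar graph carrying a Hamiltonian cycle admits a $2$-page book embedding in which the vertices are ordered as they appear along that cycle~\cite{bk-btg-79}. First I would take the planar skeleton $S=\ps(\emb(G))$ of the given planar-maximal augmentation and let $H$ be a Hamiltonian cycle of $S$; this $H$ supplies the single vertex order to be used on all four pages. Note that the augmentation adds only planar edges, so it creates no new vertices and no new crossing pairs; hence $H$ visits every vertex of $G$, and every crossing edge of $G$ is a crossing edge of the augmented graph as well.

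Next I would split the crossing edges into two sets $X_1,X_2$ by placing exactly one edge of each crossing pair into each set. Deleting $X_2$ from the augmented graph removes one edge of every crossing pair and therefore eliminates all crossings, so $S\cup X_1$ is planar; by symmetry $S\cup X_2$ is planar. Since $H\subseteq S$, the cycle $H$ is a Hamiltonian cycle of both $S\cup X_1$ and $S\cup X_2$. Applying Bernhart--Kainen to each, with the linear order induced by $H$, yields a $2$-page embedding of $S\cup X_1$ on pages $p_1,p_2$ and a $2$-page embedding of $S\cup X_2$ on pages $p_3,p_4$, all using the same vertex order.

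Finally I would assemble the four pages. Every planar edge of $G$ lies in $S$ and is placed on $p_1,p_2$ by the first embedding; every crossing edge of $G$ lies in $X_1$ or in $X_2$. I would route the $X_1$-edges on $p_1,p_2$ (again from the first embedding) and the $X_2$-edges on $p_3,p_4$ (from the second embedding, simply discarding its copies of the $S$-edges, which is harmless since deleting edges never creates a conflict). As all four pages use the common order from $H$, no conflicts arise, and $G$ is embedded on $p_1,\dots,p_4$.

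The step I expect to be the main obstacle is the claim that each $S\cup X_i$ is genuinely planar: one must check that inserting a single diagonal of each crossing $K_4$ into the face of $S$ that hosts its crossing point introduces no new crossing, neither with the skeleton nor with a diagonal of another pair. This is exactly what the planar-maximal augmentation guarantees, since removing one edge from every crossing pair deletes every crossing at once. A secondary subtlety is that $H$ may traverse augmented edges that are absent from $G$; this is immaterial, because $H$ serves only to fix the linear vertex order, and every edge of $G$ has been shown to be sub-Hamiltonian with respect to that order inside one of the two planar pieces.
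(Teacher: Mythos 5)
Your proposal is correct and follows essentially the same route as the paper: split each crossing pair between two sets $X_1,X_2$, observe that the skeleton together with either set is planar and still Hamiltonian via the cycle in the skeleton, and apply the Bernhart--Kainen two-page embedding to each with the common vertex order, discarding duplicate skeleton edges. The extra justifications you supply (planarity of $S\cup X_i$, harmlessness of augmented edges absent from $G$) are sound and only make explicit what the paper leaves implicit.
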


\begin{proof} Let $\mathcal{P}(G)$ be the planar skeleton of $G$ with Hamiltonian cycle $C$.
For each pair $(a,b)$ and $(c,d)$ of crossing edges assign $(a,b)$ to a set $X_1$ and $(c,d)$ to $X_2$
 arbitrarily. By slight the abuse of notation, denote with $X_1$ ($X_2$) the subgraphs of $G$ induced
 by $X_1$ ($X_2$).
 Both $G_1=\mathcal{P}(G)\cup X_1$ and $G_2=\mathcal{P}(G)\cup X_2$ contain
  Hamiltonian cycle $C$. Using the linear order of $C$ we can embed $G_1$ in 2 pages and
 $G_2$ in 2-pages, yielding a book embedding for $G$ on 4-pages with duplicate edges
 of $\mathcal{P}(G)$ removed.
\end{proof}

\section{Conclusion}
%We generalized the 4-page book embedding algorithm from planar to 3-connected 1-planar graphs and showed that 10 pages suffice. 
%We show that 3-connected 1-planar graphs have a book embedding on 12 pages, which improves on
% the 39-page book embedding result by Bekos \textit{et al.} for general 1-planar graphs.
%The extended wheel
% graphs $XW_{2k}$ for $k=4,5,6$ require 4 pages; see Fig.~\ref{fig:counter}.
We showed that general and 3-connected 1-planar  graphs have a book embedding on 16 and 12 pages,
 respectively, and the book embedding can be computed in linear time from a given 1-planar embedding.
 Our bound improves upon the bound of 39 given by Bekos~\textit{et al.}~\cite{BBKR15}.
 The extended wheel graphs $XW_{2k}$ for $k=4,5,6$ require 4 pages; see Fig.~\ref{fig:counter}. This was shown using a program which exhaustive searches all possible vertex orders and assignments of edges to pages. The natural open problem is to close the gap between the lower and upper bounds.
 Specifically, are there 1-planar graphs
%\begin{figure}
\begin{wrapfigure}{r}{.32\textwidth}
\centering
%\vspace{-1cm}
\includegraphics[width=0.25\textwidth]{}
\caption{The extended wheel graph $XW_8$ requires 4 pages.
%\vspace{-.9cm}
}
\label{fig:counter}
\end{wrapfigure}
%\end{figure}
 that require even 5 pages?
 What is the lowest number of pages that suffices for 1-planar graphs, or 3-connected 1-planar graphs?
These questions mirror the remaining big open problem for planar graphs: are there planar graphs that require 4 pages, or are all planar graphs embeddable on 3 pages?
% Similarly, establishing a tight bound for the book thickness of planar graphs remains open.
%We
%conjecture that this
% bound can be improved. We would like to find
%the exact least bound, both for planar and for 1-planar graphs.
%While Yannakakis~\cite{Yanna89} claims that there four pages are also necessary for planar graphs
% but the announced paper has not yet appeared. His construction in \cite{YannaSTOC86} is terse.
%We've tried to find 1-planar graphs which need many pages, and could not find one that needs
% at least five pages. For our attempt we used a tool which implements an exhaustive search on
% the linear layout and the assignment of edges to pages. The tool shows that the extended wheel
% graphs $XW_{2k}$ for $k=4,5,6$ require four pages; see Fig.~\ref{fig:counter} for example.

\subsubsection*{Acknowledgments.}
We thank O.~Aichholzer, M.~Bekos, D.~Eppstein,  K.~Hanauer, S.~Pupyrev, A.~Schulz,
 and A.~Wolff for useful discussions and in particular M.~Bekos and M.~Kaufmann for pointing out
 an error in an earlier version of this paper.

%\vspace{-.2cm}
\bibliographystyle{abbrv}
{
\begin{small}

%\bibliography{stephen}
\end{small}
}

\end{document}